\documentclass[stsy,sglanonrev]{INFORMS-STSY-Template-26-02-2026/informs4}
\usepackage{INFORMS-STSY-Template-26-02-2026/eqndefns-left} % For checking the display equation width and equation environment definitions %
\RequirePackage{tgtermes}
\RequirePackage{newtxtext}
\RequirePackage{newtxmath}
\RequirePackage{bm}
\RequirePackage{endnotes}

\OneAndAHalfSpacedXII % Current default line spacing
\usepackage{caption}
\usepackage[labelfont=sf]{subcaption}
\graphicspath{{INFORMS-STSY-Template-26-02-2026/}}

\usepackage{natbib}
 \bibpunct[, ]{(}{)}{,}{a}{}{,}%
 \def\bibfont{\small}%
\EquationsNumberedThrough    % Default: (1), (2), ...
\TheoremsNumberedThrough     % Preferred (Theorem 1, Lemma 1, Theorem 2)
\ECRepeatTheorems

\providecommand{\texorpdfstring}[2]{#1}

\newcommand{\E}[1]{\mathrm{\bf E}\left[#1\right]}
\def\P#1{\mathrm{\mathbb{P}}\left[#1\right]}

\usepackage{xparse}
\usepackage{xspace}

\newcommand{\threshsplit}[1]{\mbox{Thresh-SPLIT}(\ensuremath{#1})\xspace}

\newcommand{\SPLIT}{SPLIT\xspace}

\newcommand{\DynoSPLIT}{\texorpdfstring{\text{DynaSPLIT}}{DynaSPLIT}\xspace}
\newcommand{\Fbar}{\overline F}
\newcommand{\one}{\mathbf 1}
\newcommand{\hatA}{\hat{A}}
\makeatletter
\newcommand{\WA}{\@ifnextchar\bgroup{\WA@split}{\mathrm{WA}}}
\newcommand{\WA@split}[2]{W_{A_{#1}}(#2)}
\makeatother

\let\informsproof\proof
\let\endinformsproof\endproof
\renewenvironment{proof}[1][Proof.]{\informsproof{#1}}{\Halmos\endinformsproof}

\def\AppendixFontSize{\normalsize}

\MANUSCRIPTNO{}

\begin{document}
%%%%%%%%%%%%%%%%

% Author's names for the running heads (hidden under dblanonrev):
% \RUNAUTHOR{Jones and Wilson}
% \RUNAUTHOR{Jones, Miller, and Wilson}
% \RUNAUTHOR{Jones et al.} % for four or more authors
\RUNAUTHOR{Li, Harchol-Balter, and Scheller-Wolf}

% Title or shortened title suitable for running heads.
\RUNTITLE{\DynoSPLIT{}: Strong tail-optimal scheduling for the M/G/n}

% Full title.
\TITLE{\DynoSPLIT{}: Strong tail-optimal scheduling for the M/G/n with heavy-tailed job sizes}

% Block of authors and their affiliations (hidden under dblanonrev).
% NOTE: Authors with same affiliation, if the order of authors allows,
%   should be entered in ONE field, separated by a comma.
%   \EMAIL field can be repeated if more than one author
\ARTICLEAUTHORS{%
\AUTHOR{Zhouzi Li, Mor Harchol-Balter, Alan Scheller-Wolf}
\AFF{Carnegie Mellon University, \EMAIL{zhouzil@andrew.cmu.edu}, \EMAIL{harchol@cs.cmu.edu}, \EMAIL{awolf@andrew.cmu.edu}}
} % end of the block

% At most 200 words, one paragraph, no citations or equations.
\ABSTRACT{% Abstract text only; main.tex wraps it in \ABSTRACT{}.
% STSY: at most 200 words, one paragraph, no citations or equations.
% \draftplaceholder
We study the asymptotic response time tail in the M/G/$n$ queue with heavy-tailed (regularly varying) job sizes. For this setting, strongly tail optimal policies are known only at lower load, where $n-1$ servers suffice for stability. At higher load, the tail optimal scheduling has been open. This paper presents \DynoSPLIT, the first strongly tail optimal policy at high load. Together with prior work, this result shows that strong tail optimality is achievable across the entire stability region of the M/G/$n$ with heavy-tailed job sizes.
 }

% \FUNDING{}

% Fill in data. If unknown, outcomment the field
% \KEYWORDS{}

\maketitle
%%%%%%%%%%%%%%%%%%%%%%%%%%%%%%%%%%%%%%%%%%%%%%%%%%%%%%%%%%%%%%%%%%%%%%

% !TEX root = ../main.tex
% New introduction following the whiteboard outline of 2026-09-21.
% The previous version is archived in 1-introduction-old.tex (not compiled).
% The whole section is typeset in blue to mark it as new text; delete the
% \begingroup\color{blue} ... \endgroup pair to restore black.

\section{Introduction}
\label{sec:introduction}

In many computing systems, the tail of the response time distribution matters at least as much as its mean. Service-level objectives are stated in terms of the 99th or 99.9th percentile of response time, and what users notice is the rare job that waits far longer than it should.  
  
The scheduling policy is an important tool for controlling this tail. Following the theory literature on tail-optimal scheduling \citep{wierman2012tail,boxma2007tails,nuyens2008preventing,yu2024strongly,split}, we measure the tail asymptotically: We study the decay of $\P{T>t}$ as $t\to\infty$, where $T$ denotes the response time of a job. As argued in \citet{yu2024strongly}, this asymptotic regime governs the high quantiles of response time, and it is the regime in which optimality questions are tractable. A policy is \emph{strongly tail optimal} if its $\P{T>t}$ is asymptotically no larger than under any other policy, and \emph{weakly tail optimal} if it is asymptotically within a constant factor of every other policy. The formal definitions are given in Section~\ref{sec:setting}. 

Throughout the paper we focus on the case when job sizes are heavy-tailed (or more precisely regularly varying, see Section~\ref{sec:setting}). This is representative of real world workloads \citep{crovella1997self,harchol1997exploiting,tirmazi2020borg}, where a small fraction of the jobs are enormous and account for a large fraction of the work. We defer all discussion of prior work on the light-tailed world to Section~\ref{sec:related-work}.

Under heavy-tailed workloads, for a single server, the tail-optimal policy has long been known. Let $S$ denote the job size, $\lambda$ denote the arrival rate, and $\rho:= \frac{\lambda \E{S}}{n}$ denote the load. The optimal response time tail is asymptotically $\P{T>t} \sim \P{S>(1-\rho)t}$ \citep{wierman2012tail,yu2024strongly}. This tail is attained by Shortest Remaining Processing Time (SRPT) and by many other policies that protect small jobs from large ones, including Processor Sharing, Foreground-Background, SMART, and Gittins \citep{nuyens2008preventing,borst2006sojourn,guillemin2004tail,scully2025does}. The lesson of the single-server case is that helping a large job never helps the tail: Any capacity given to a large job is taken directly from the small jobs behind it, and it is the small jobs that must be protected.

For multi-server systems, by contrast, tail analysis is still at a very early stage.  What has been shown is that the single-server lesson is wrong: SRPT is not optimal. In particular, a very recent paper by Li et al.~\citeyearpar{split} has studied the case of an M/G/$n$ with heavy-tailed job sizes that are known to the scheduler. They show that a multi-server system needs to provide a little ``sympathy'' for large jobs. Their policy \SPLIT (SymPathy for Large jobs Improves Tail latency) splits the $n$ servers into two groups: $n-1$ servers run SRPT, where the one remaining server is dedicated to running the job with the largest original size in the system.

Li et al.~\citeyearpar{split} prove that the \SPLIT policy is strongly tail optimal. This optimality, however, rests on one condition: The $n-1$ SRPT servers must be able to carry the whole load by themselves, i.e., $\rho<\frac{n-1}{n}$. Equivalently, the system must have a spare server. Throughout this paper, we refer to $\rho<\frac{n-1}{n}$ as the \emph{lower-load regime} and to $\frac{n-1}{n}\le\rho<1$ as the \emph{high-load regime}.
While \SPLIT is shown to be strongly tail optimal in the lower-load regime, it performs badly in the high-load regime.

For the high-load regime, Li et al.~\citeyearpar{split} propose a threshold policy, \threshsplit{d}, which routes jobs larger than a threshold $d$ to one server and the remaining jobs to the other $n-1$ servers. \threshsplit{d} can be made arbitrarily close to strongly tail optimal by tuning $d$; however, \threshsplit{d} is {\em not }strongly tail optimal for any $d$.

Li et al.~\citeyearpar{split} pose as an open problem the question of whether strong tail optimality can be achieved at all in the high-load regime. The goal of this paper is to close this open problem by introducing a new policy, \DynoSPLIT, that is strongly tail optimal. 
Specifically, we prove that \DynoSPLIT is strongly tail optimal in the high-load regime; recall that SPLIT~\citep{split} already achieves strong tail optimality under lower load.
Table~\ref{tab:intro:results} lists the optimal asymptotic response time tail and summarizes the known results together with the contribution of this paper.

\begin{table}[t]
\TABLE
{Tail optimality results for the M/G/$n$ with heavy-tailed job sizes when job sizes are known to the scheduler.\label{tab:intro:results}}
{\begin{tabular}{@{}lll@{}}
\hline\up\down
 & \textbf{Lower load} $\rho<\frac{n-1}{n}$ & \textbf{High load} $\frac{n-1}{n}\leq\rho<1$ \\
\hline\up
\textbf{Optimal asymptotic response time tail}  & $\P{S>t}$ & $\P{S>n(1-\rho)t}$ \\
\textbf{Arbitrarily close to strongly optimal} & \threshsplit{d} \citep{split} & \threshsplit{d} \citep{split} \\
\down\textbf{Strongly optimal} & \SPLIT \citep{split} & \DynoSPLIT [\textbf{This paper}] \\
\hline
\end{tabular}}
{}
\end{table}

In the remainder of the introduction, Section~\ref{sec:intro:highload} gives background on the two load regimes and explains why the high-load regime is fundamentally harder. Section~\ref{sec:intro:dilemma} presents the challenge behind achieving strong tail optimality at high load. Section~\ref{sec:intro:results} presents our results and the key ideas behind them.

\subsection{Background: the high-load regime versus the lower-load regime}
\label{sec:intro:highload}

Throughout, we write $\delta:=n(1-\rho)$ for the \emph{spare capacity} of the system, namely the total capacity $n$ of the servers minus the mean amount of work $n\rho$ arriving per unit time. The boundary $\rho=\frac{n-1}{n}$ between the two regimes is exactly $\delta=1$: In the lower-load regime the system has more than one server of capacity to spare, and in the high-load regime it has at most one.

The two regimes differ in two respects:

\paragraph{A spare server or not.}
The main difference is whether one large job can be given a whole server without harming the rest of the system. In the lower-load regime, a whole server can be spared: If one server is occupied by a huge job for a long time, the other $n-1$ servers still have positive spare capacity $(n-1)-n\rho=\delta-1>0$ and keep up with the arrivals.
In the high-load regime, in contrast, one huge job occupying one server already leaves the other $n-1$ servers with an offered load of $n\rho\ge n-1$, so they {have no spare capacity left}. {Roughly speaking, these $n-1$ servers then cannot keep up with the arriving work for as long as the huge job is present, and jobs arriving during this period can suffer very large response times.}

{We call such an event a \emph{catastrophe}: Huge jobs hold servers for so long that the other servers cannot keep up with the arriving work, and many jobs are delayed by large amounts.} {The} boundary $\rho=\frac{n-1}{n}$ changes {the number of huge jobs needed to cause a catastrophe}: At high load, a single huge job {is enough}, while at lower load, a catastrophe requires multiple huge jobs to be present in the system. The same distinction runs through the earlier work on multi-server FCFS queues with heavy-tailed job sizes \citep{schellerwolf1997delay,schellerwolf2000further,whitt2000impact,schellerwolf2003necessary,foss2006heavy,schellerwolf2006structural,schellerwolf2011sink,foss2012large,blanchet2016tail}, where the number of huge jobs needed to overload the remaining servers governs both the tail and the moments of the waiting time.  

\paragraph{Different lower bounds.}
The two load regimes also differ in the lower bound that a strongly tail optimal policy must meet \citep{split}. In the lower-load regime, the lower bound on the response time tail is $\P{S>t}$: A job cannot finish before its own size, so no policy can do better. \SPLIT achieves this bound. In the high-load regime, the lower bound on the response time tail is $\P{S>n(1-\rho)t}=\P{S>\delta t}$; note that this bound is higher than $\P{S>t}$ if $\delta<1$. This bound comes from comparing the $n$-server system with a single \emph{super server} of speed $n$: The super server can imitate any $n$-server policy and therefore has a better response time tail than any $n$-server policy. By the single-server result \citep{wierman2012tail,yu2024strongly}, no super-server policy has a response time tail lighter than $\P{S>\delta t}$ (see Section~\ref{sec:theory:lowerbound}).

The super-server bound is hard to attain: It requires the $n$-server system to behave, in its tail, exactly like a single server that is $n$ times as fast. So far, the performance of a super server has been matched only in the heavy-traffic limit $\rho\to1$, for the mean response time \citep{grosof2019srpt} and for the response time tail under light-tailed job sizes \citep{yu2025tale}. This paper presents the first policy to meet this super-server lower bound in the high-load regime, where the load $\rho$ is a fixed constant less than $1$.

\subsection{The challenge of attaining optimality at high load}
\label{sec:intro:dilemma}

Consider a huge job; how much capacity should be given to it while it is present? The super-server lower bound leaves no freedom in this choice: The answer must be $\delta$. We now explain why. Suppose that a policy holds back a fixed reserve $\varepsilon>0$ of capacity from every huge job, so that a huge job is served at an average rate of at most $\delta-\varepsilon$ over its lifetime. A huge job of size $x$ then stays in the system for at least $x/(\delta-\varepsilon)$ time. Turning this around, every job larger than $(\delta-\varepsilon)t$ has response time larger than $t$, so under this policy $\P{T>t}\ge\P{S>(\delta-\varepsilon)t}$. Compare this bound with the super-server lower bound $\P{S>\delta t}$. Although the two only differ in the threshold on the job size, $(\delta-\varepsilon)t$ instead of $\delta t$, that is already enough to ruin the optimality. For a regularly varying job size distribution with tail index $\alpha$, the tail behaves like a power law at large sizes, so lowering the threshold by the factor $(\delta-\varepsilon)/\delta$ raises the probability, for large $t$, by the factor $(\delta/(\delta-\varepsilon))^{\alpha}>1$. The response time tail of the policy is therefore at least a constant factor above the super-server lower bound, for all large $t$, however small the reserve $\varepsilon$ is.

Hence a strongly tail optimal policy cannot hold back any fixed reserve $\varepsilon>0$ from a huge job: The average rate at which it serves a huge job must approach $\delta$ as the size of the huge job grows. In other words, a huge job must be given essentially the entire spare capacity of the system, for as long as it is present.

However, there is a conundrum: Suppose that a huge job is given the entire spare capacity $\delta$. The huge job then remains in the system for a time proportional to its size, which is arbitrarily long. During this whole time, the remaining jobs are left with capacity $n-\delta=n\rho$, which is exactly the mean amount of work arriving per unit time. {Roughly speaking, the remaining jobs then cannot keep up with the arriving work, and jobs arriving while the huge job is present can be delayed by amounts that grow with the size of the huge job.}

This is the dilemma of the high-load regime: Strong tail optimality requires giving a huge job essentially all of the spare capacity of the system, while the remaining jobs need a strictly positive share of that same spare capacity to keep up with the arriving work. No fixed division of the spare capacity satisfies both requirements.

% The threshold policy \threshsplit{d} of \citet{split} sits on the second horn of the dilemma. Its threshold $d$ determines how much of the load is routed to the $n-1$ small-job servers and hence the reserve that these servers keep; the large-job server carries the rest of the load, so a huge job on the large-job server is served at an average rate of $\delta$ minus that reserve. The tail of \threshsplit{d} therefore approaches the best possible only as $d$ approaches the value at which the reserve vanishes, which is why \threshsplit{d} can be made arbitrarily close to strongly tail optimal but never is. The dilemma has no counterpart in the lower-load regime, where the remaining servers keep a reserve of $\delta-1>0$ for free even when the huge job holds a whole server, and it is the reason why the high-load regime needs a new idea.

\subsection{Results and key ideas}
\label{sec:intro:results}

This paper introduces \DynoSPLIT, the first policy that is strongly tail optimal in the high-load regime. Our main result (Theorem~\ref{thm:frac:optimality}) is that \DynoSPLIT attains the super-server lower bound at every load $\frac{n-1}{n}\le\rho<1$. Together with \SPLIT \citep{split}, strongly tail optimal policies now exist over the entire stability region of the M/G/$n$ with heavy-tailed, known job sizes. 

\paragraph{Key idea: divide the spare capacity dynamically.}
The dilemma of Section~\ref{sec:intro:dilemma} rules out every fixed division of the spare capacity between the largest job and the other jobs. \DynoSPLIT therefore makes the division depend on the size of the largest job. The job with the largest original size in the system, which we call the \emph{anchor}, is served at rate $\delta$ minus a \emph{margin} that depends on the anchor's size: The margin is strictly positive for every finite anchor size but decreases to zero as the anchor's size grows. The other (smaller) jobs are served in SRPT order with the remaining capacity.
A larger anchor thus receives a larger share of the spare capacity, and the share approaches the entire spare capacity as the anchor's size grows. This \emph{dynamic} division motivates the name \DynoSPLIT. Section~\ref{sec:frac:roadmap} defines the policy formally.

\paragraph{Why it works.}
A {margin} that depends on the size of the anchor works for both the anchor and the other (smaller) jobs: The {margin} is strictly positive for every finite anchor size, which is what the smaller jobs need, yet it vanishes as the anchor's size grows, which is what the anchor needs.

Under \DynoSPLIT, the {margin} shrinks exactly when the anchor is huge, and one may worry that these are the moments when the small jobs need the {margin} most. Roughly speaking, the opposite is true. The {margin} that the small jobs need is proportional not to their workload but to its fluctuations. The longer the period over which the workload is averaged, the smaller the {margin} per unit time that suffices to absorb the small-job fluctuations. A huge anchor is present for a long time, and, over a long time, a small {margin} per unit time suffices.

\paragraph{Analysis.}
Under \SPLIT in the lower-load regime, the small jobs always see a constant {margin}, which Li et al.~\citeyearpar{split} exploit to show that there is only a negligible probability of a small job's response time exceeding a constant multiple of its size. Under \DynoSPLIT in the high-load regime, the {margin} seen by a small job depends on the current anchor and can be arbitrarily small. The analysis therefore requires new techniques. Section~\ref{sec:frac:three-regimes} gives the roadmap of the proof, and Section~\ref{sec:frac:proof} contains the details.

% !TEX root = ../main.tex
% Rewritten 2026-09-22 so as not to duplicate the problem-setting section
% of the SPLIT paper.  Only the super-server lower bound is derived.  The
% previous version is archived in 3-problem-setting-old.tex (not compiled).
% The whole section is typeset in blue to mark it as new text; delete the

\section{Problem Setting and Notation}
\label{sec:setting}

This section fixes the model and the objective (Section~\ref{sec:setting:model}) and derives the super-server lower bound (Section~\ref{sec:theory:lowerbound}), which is the benchmark that Theorem~\ref{thm:frac:optimality} matches. 

\subsection{Model and objective}
\label{sec:setting:model}

\paragraph{M/G/$n$ System.}
There are $n$ identical servers of speed one. Jobs arrive as a Poisson process of rate $\lambda$, and the size of a job, namely the amount of work it requires, is drawn independently from the distribution of a random variable $S$. The \emph{response time} of a job is the time from its arrival to its departure.

\paragraph{Heavy-tailed job sizes.}
The job size distribution is \emph{regularly varying} with index $\alpha>1$: For every $k>0$,
\begin{equation}
    \lim_{t\to\infty}\frac{\P{S>kt}}{\P{S>t}}=k^{-\alpha}.
    \label{eq:regular varying definition}
\end{equation}
In words, the tail of $S$ is a power law up to a slowly varying factor, as for the Pareto distribution, and $\alpha>1$ makes the mean size $\E{S}$ finite. Regular variation is the standard formalization of heavy-tailed job sizes in the tail-optimality literature \citep{wierman2012tail,nuyens2008preventing,yu2024strongly,split}. 

\paragraph{Fractional server allocation.}
We allow the capacity of a server to be divided among several jobs. At each time $t$, a scheduling policy assigns every job $j$ in the system a service rate $v_j(t)\in[0,1]$, and the rates are constrained only by the total capacity of the servers: 
\begin{equation}
    \sum_j v_j(t)\le n.
    \label{eq:rate constraint}
\end{equation}
The remaining size of job $j$ decreases at rate $v_j(t)$. In practice, a fractional rate is realized by time sharing a server between two jobs at a fine granularity.
 
\paragraph{Preemptible jobs.}
We assume that jobs are preemptible without loss of work. Preemption is indispensable under heavy-tailed job sizes. Once a huge job has started, a non-preemptive policy cannot prevent the small jobs that arrive during its service from waiting for it. In the single-server setting, no non-preemptive policy avoids the resulting large delays \citep{anantharam1988delays,boxma2007tails}.

\paragraph{Known job sizes.}
The policy knows the size of every job upon its arrival.

\paragraph{Load and spare capacity.}
Let
\[
    r:=\lambda\E{S},\qquad
    \rho:=\frac{r}{n},\qquad
    \delta:=n-r=n(1-\rho).
\]
The \emph{resource requirement} $r$ is the mean amount of work arriving per unit time, measured in servers; the \emph{load} $\rho$ is the same quantity as a fraction of the $n$ servers; and the \emph{spare capacity} $\delta$ is the capacity that remains after the arriving work is served. We assume $r<n$, or equivalently $\delta>0$, so that every work-conserving policy makes the system stable \citep{harchol2013performance}. We also write
\begin{equation}
    r^{>x}:=\lambda\E{S\cdot\one\{S>x\}}
    \label{eq:r_x}
\end{equation}
for the part of the resource requirement contributed by jobs larger than $x$. The results of this paper concern the high-load regime $\rho\ge\frac{n-1}{n}$, that is, $\delta\le1$.

\paragraph{Objective.}
For a policy $\pi$, let $T^\pi$ denote the response time of a job when the system is in its stationary regime. The objective is the asymptotic decay of $\P{T^\pi>t}$ as $t\to\infty$. A policy $\pi$ is \emph{weakly tail optimal} if there exists a constant $c\ge1$ such that
\begin{equation}
    \sup_{\pi'}\limsup_{t\to\infty}\frac{\P{T^\pi>t}}{\P{T^{\pi'}>t}}=c,
    \label{eq:asym tail}
\end{equation}
where the supremum is over all policies $\pi'$ in the model above, and \emph{strongly tail optimal} if additionally $c=1$. 

\subsection{Preliminaries: the super-server lower bound}
\label{sec:theory:lowerbound}

\paragraph{Super-server dominance.}
Consider a single server of speed $n$ that faces the same arrivals and the same job sizes as the $n$-server system; we call it the \emph{super server}. The super server can do everything that the $n$ servers can do \citep{harchol2013performance}. Hence every response time tail that the $n$-server system can achieve is also achieved by some super-server policy, and any lower bound that holds for every super-server policy is a lower bound for every $n$-server policy as well.

\paragraph{The single-server tail.} 
The super server is a single server, and, for a single server, the best possible tail is known. Consider a single server of speed one with load $\rho<1$ and regularly varying job sizes. Then no policy has a response time tail lighter than $\P{S>(1-\rho)t}$, and SRPT attains this tail \citep{wierman2012tail,nuyens2008preventing,yu2024strongly}: For every single-server policy $\pi'$,
\begin{equation}
    \liminf_{t\to\infty}\frac{\P{T^{\pi'}>t}}{\P{S>(1-\rho)t}}\ge1,
    \qquad
    \lim_{t\to\infty}\frac{\P{T^{\mathrm{SRPT}}>t}}{\P{S>(1-\rho)t}}=1.
    \label{eq:single server heavy tail}
\end{equation}

\paragraph{The super-server lower bound.}
Measured in units of its own speed, the super server is a single server of speed one that sees jobs of size $S/n$ and has load $\lambda\E{S/n}=\rho<1$. Applying~\eqref{eq:single server heavy tail} with $S/n$ in place of $S$, every super-server policy $\hat\pi$ satisfies
\begin{align*}
    &\liminf_{t\to\infty}\frac{\P{T^{\hat\pi}>t}}{\P{S/n>(1-\rho)t}}\ge1,\\
    &\text{where}\quad
    \P{S/n>(1-\rho)t}=\P{S>n(1-\rho)t}=\P{S>\delta t}.
\end{align*}
By the super-server dominance, the same lower bound also holds for every policy $\pi$ of the $n$-server system,
\begin{equation}
    \liminf_{t\to\infty}\frac{\P{T^\pi>t}}{\P{S>\delta t}}\ge1.
    \label{eq:lower bound super server}
\end{equation}
We call~\eqref{eq:lower bound super server} the \emph{super-server lower bound}. Theorem~\ref{thm:frac:optimality} shows that \DynoSPLIT attains~\eqref{eq:lower bound super server} at every load in the high-load regime.

% !TEX root = ../main.tex
\section{\DynoSPLIT: Policy Design and Proof Roadmap}
\label{sec:frac:roadmap}

In this section, we provide an overview of the policy design and the proof roadmap.
Section~\ref{sec:frac:policy} recalls the idea of \DynoSPLIT from
Section~\ref{sec:introduction} and formally defines the policy, and
Section~\ref{sec:frac:three-regimes} gives the roadmap for its tail analysis.
The detailed proof appears in Section~\ref{sec:frac:proof}.

\subsection{The \DynoSPLIT policy} 
\label{sec:frac:policy}
% Alan (PDF p.3, l.140, on the deleted 3.1 sentence "No fixed margin \varepsilon serves both."): "serves" -> "can satisfy".
% Alan (PDF p.4, l.167, deleted 3.1 paragraph): the stray "6" in "difference6" crossed out. 

\DynoSPLIT serves the largest job in the system, which we call the
\emph{anchor}, at rate \(\delta\) minus a margin that depends on the
anchor's size, and serves the other jobs in SRPT order with the remaining
capacity.  Recall the dilemma of Section~\ref{sec:intro:dilemma}: A huge job
must be served at a rate approaching the spare capacity \(\delta\), or the
response time tail loses a constant factor; simultaneously, the other jobs need a
positive share of the same spare capacity, or their work accumulates for as
long as the huge job is present.  To meet both requirements, \DynoSPLIT uses
a margin that is strictly positive for every finite anchor size and vanishes
as the anchor size grows.  Here ``largest'' is meant roughly;
precisely, the anchor is chosen as follows:
\begin{definition}[Anchor]
    \label{def:frac:anchor}
    A job arriving to an empty system becomes the \emph{anchor}.  The anchor
    keeps its role until it completes.  When the anchor completes, the job
    with the largest \emph{original} size among those present, if any,
    becomes the new anchor; ties are broken by arrival order.
\end{definition}

We now formalize the size-dependent margin and the resulting service
allocation.  For an anchor of original size \(z\), let \(\varepsilon(z)\),
the \emph{capacity margin}, denote the capacity withheld from the anchor
and given to the non-anchor jobs.  The anchor's service rate is
\begin{equation}
    \gamma(z):=\delta-\varepsilon(z).
    \label{eq:frac:gamma-definition}
\end{equation}
% Alan (PDF p.4, l.178): \epsilon(z) circled, \varepsilon written above (already applied).
The margin $\varepsilon(z)$ must stay strictly between zero and $\delta$ at every finite anchor size $z$, so that both the anchor and the non-anchor jobs receive positive capacity, and vanish in the large-size limit to ensure strong optimality. Formally, we require
\begin{equation}
    0<\varepsilon(z)<\delta,
    \qquad
    \lim_{z\to\infty}\varepsilon(z)=0.
    \label{eq:frac:margin-basic}
\end{equation}
%  We also bound it by
% \(\delta/2\), which guarantees the anchor a service rate of at least
% \(\delta/2\).  Formally, we require
% \begin{equation}
%     0<\varepsilon(z)\le \frac{\delta}{2},
%     \qquad
%     \varepsilon(z)\ \text{is nonincreasing},
%     \qquad
%     \varepsilon(z)\longrightarrow0.
%     \label{eq:frac:margin-basic}
% \end{equation}
Here
\(\delta\) is fixed by the system load; it is \(\varepsilon(z)\), not
\(\delta\), that vanishes with the anchor size.

Note that \DynoSPLIT serves \(n+1\)
jobs simultaneously.  Whenever at least
\(n+1\) jobs are present, the \(n-1\) smallest
non-anchor jobs each receive a full server, while one additional unit of
capacity is shared between the anchor and one more non-anchor job: 
\[
    \underbrace{\overbrace{1+\cdots+1}^{n-1\text{ smallest jobs}}}_{n-1\text{ servers}}
    +\underbrace{\overbrace{\bigl(1-\gamma(z)\bigr)}^{n\text{th smallest job}}
      +\overbrace{\gamma(z)}^{\text{anchor}}}_{\text{one server}}
    =n.
\]
This allocation is allowed by the fractional-server model in
Section~\ref{sec:setting}; Figure~\ref{fig:frac:allocation} illustrates it.

\begin{figure}[t]
\FIGURE
{\includegraphics[width=0.72\textwidth]{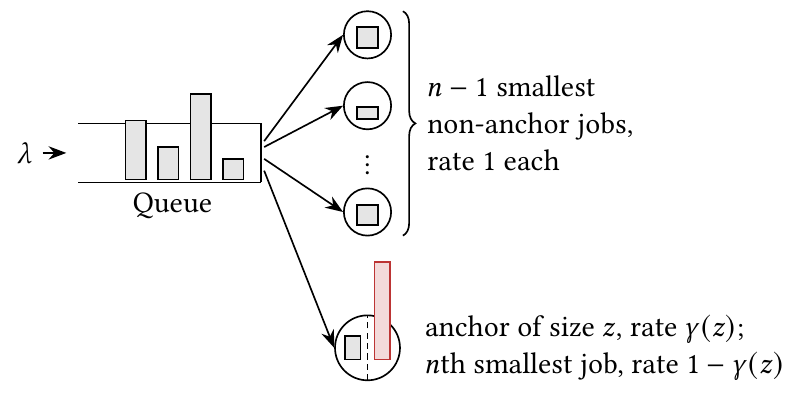}}
{Service allocation under \DynoSPLIT when more than \(n\) jobs are present.\label{fig:frac:allocation}}
{}
% Alt text: A queue feeds n servers drawn as circles; the last server is shared by the nth smallest job and the anchor.
\end{figure}

\begin{definition}[\DynoSPLIT with capacity margin \(\varepsilon\)]
\label{def:frac:policy}
Let the anchor be as in Definition~\ref{def:frac:anchor}.
If at most \(n\) jobs are present, every job is served at rate one.  If more
than \(n\) jobs are present and the anchor has original size \(z\), then: 
\begin{enumerate}\renewcommand{\theenumi}{\roman{enumi}}\renewcommand{\labelenumi}{\textup{(\theenumi)}}
    \item the anchor is served at rate \(\gamma(z)\);
    \item the \(n-1\) non-anchor jobs with the smallest remaining sizes are
          each served at rate one;
    \item the non-anchor job with the nth smallest remaining size is served
          at rate \(1-\gamma(z)\).
\end{enumerate}
All other jobs receive rate zero.  Ties in the non-anchor SRPT order are
broken by arrival order.
\end{definition}

The anchor is selected using original size, whereas service priority among
the other jobs uses remaining size.  The anchor role is
non-preemptive: A job larger than the current anchor waits on the non-anchor
side until the current anchor completes.  Non-preemption is adopted only for ease of
analysis.  It guarantees that, once a job becomes the anchor, its remaining
service is determined by its own size alone.

The allocation leaves the non-anchor jobs exactly \(\varepsilon(z)\) units
of spare capacity.  Indeed, when enough non-anchor jobs are present to use
their entire allocation, their aggregate service rate is
\begin{equation}
    (n-1)+\bigl(1-\gamma(z)\bigr)
    =n-\delta+\varepsilon(z)
    =r+\varepsilon(z).
    \label{eq:frac:nonanchor-capacity}
\end{equation}

The policy is also work conserving: Whenever at least \(n\) jobs are
present, it uses all \(n\) units of capacity; when fewer are present, every
job is served at rate one.  Since \(r<n\), the system is therefore stable
\citep{harchol2013performance}, and below we consider it in steady state.

The main result of this paper is the following theorem.

\begin{theorem}[Strong tail optimality of \DynoSPLIT]
\label{thm:frac:optimality}
Suppose \(n\ge2\), \((n-1)/n\le\rho<1\), and the job-size distribution has a
regularly varying tail with tail index \(\alpha>1\).  For the capacity
margin constructed in Section~\ref{sec:frac:combining},
\[
    \lim_{t\to\infty}
    \frac{\P{T^{\DynoSPLIT}>t}}{\P{S>\delta t}}
    =1,
    \qquad\text{where}\quad \delta:=n(1-\rho).
\]
Consequently, \DynoSPLIT is strongly tail optimal.
\end{theorem}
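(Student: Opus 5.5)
The lower bound $\liminf_{t\to\infty}\P{T^{\DynoSPLIT}>t}/\P{S>\delta t}\ge1$ is immediate from the super-server lower bound~\eqref{eq:lower bound super server}. The work is the matching upper bound $\limsup\le1$. I would fix a small $\eta>0$ and split a tagged arrival of size $S$ into three regimes: \emph{small} jobs, $S\le \eta t$; \emph{large} jobs, $S>\eta t$, which eventually become the anchor; and the \emph{background event} that some other huge job is present when the tagged job arrives or arrives during its stay. The target is
\[
\P{T>t}\le \P{S>(1-\eta')\gamma_t\, t}+o(\P{S>\delta t}),
\]
where $\gamma_t\to\delta$ and $\eta'\to0$. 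Regular variation~\eqref{eq:regular varying definition} then gives the ratio $(1-\eta')^{-\alpha}(\delta/\gamma_t)^{\alpha}\to1$ once $\eta\downarrow0$.

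\textbf{Large tagged job.} Suppose the tagged job has size $x>\eta t$. It first waits on the non-anchor side until the current anchor finishes, and while waiting it may receive some SRPT service. Once it becomes the anchor it is served at rate $\gamma(z)$ with $z\ge x$ whenever more than $n$ jobs are present, and at rate one otherwise. So its anchor phase lasts at most $x/\gamma(x)$, provided the anchor role is never stolen. Non-preemption guarantees this, and is exactly why the policy uses it. I then need two facts:
\begin{itemize}
\item the time until the tagged job becomes anchor is $o(t)$ with probability $1-o(1)$, conditional on $S=x$;
\item the current anchor at the tagged arrival, and any larger job arriving before the tagged job acquires the anchor role, is huge only with probability $o(\P{S>\delta t})/\P{S>\eta t}$, which is $o(1)$.
\end{itemize}
The first fact holds because the current anchor at a typical arrival time has a residual lifetime with a tail lighter than linear in $t$, outside the background event. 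Choosing $\varepsilon(z)\to0$ gives $\P{x/\gamma(x)>t,\ x>\eta t}\le \P{S>\gamma(\eta t)\,t}$, and this is $\sim\P{S>\delta t}$.

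\textbf{Small tagged job: the main obstacle.} For $S\le\eta t$ I must show $\P{T>t,\ S\le\eta t}=o(\P{S>\delta t})$. The tagged job competes with non-anchor work, which has capacity $r+\varepsilon(z)$ by~\eqref{eq:frac:nonanchor-capacity}. Under SRPT among non-anchors, only work of jobs with remaining size at most $S$ can delay it, plus at most $n$ ``blocking'' jobs. Its delay is therefore bounded by the busy period of a truncated workload (jobs of size $\le\eta t$) served at rate $r+\varepsilon(z)$ while that anchor is present. The difficulty is that $\varepsilon(z)$ may be tiny when the anchor is huge. My plan is a drift/fluctuation argument. Over an anchor lifetime of length $L\approx z/\gamma$, the truncated net input is a random walk with drift $-\varepsilon(z)+r^{>\eta t}$. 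The $r^{>\eta t}$ term is negligible, and large jobs are handled by the background event. The walk's fluctuations are controlled by a Fuk--Nagaev/Bernstein bound for truncated regularly varying sums. A delay exceeding $ct$ then requires either about two jumps of size order $t$ (probability $O(\lambda t\P{S>\eta t})^2$, which is $o(\P{S>t})$ since $\alpha>1$), or a deviation of order $t$ from the drift. I would construct $\varepsilon(z)$, presumably as in Section~\ref{sec:frac:combining}, decaying slowly enough, e.g.\ $\varepsilon(z)\ge z^{-\beta}$ with $\beta$ small, that over a window of length $z$ the cumulative margin $z\varepsilon(z)$ dominates the $\sqrt{z}$-scale (or $z^{1/\alpha}$-scale if $\alpha<2$) fluctuations with superpolynomially small failure probability.

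\textbf{Combining.} I would also account for anchor transitions, where $z$ changes and the small-job backlog carries over. I would handle them by a union bound over anchors overlapping the tagged job's stay, whose number is $O(t)$ with high probability. Summing the three regimes gives $\P{T>t}\le(1+o(1))\P{S>\delta t}+o(\P{S>\delta t})$, and taking $\eta\downarrow0$ completes the proof. I expect the small-job step, specifically choosing $\varepsilon$ so that the vanishing margin still absorbs the fluctuations uniformly over anchor sizes, to be the core technical difficulty.
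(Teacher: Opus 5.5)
\paragraph{Where the proposal agrees with the paper.} Your lower bound, the huge-job term, and the large-job mechanism are the same as in the paper: the tagged job waits briefly for the current anchor, then holds the non-preemptive anchor role at rate $\gamma(x)\to\delta$, and fails only on a collision.

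\paragraph{Gap 1: the small/large cutoff.} You put every tagged job of size up to $\eta t$, with $\eta$ fixed, on the non-anchor side and bound its delay by a busy period drained at net rate about $\varepsilon(z)$. But that job's \emph{own} work is also drained out of this margin. So the bound is below $t$ only if $x=O(\varepsilon\,t)$ for the margin $\varepsilon$ in force, and no constant margin is available over the horizon. Outside events of probability $o(\Fbar(\delta t))$, the only uniform margin you can guarantee on $[0,t]$ is $\varepsilon(\hatA_t)\to0$, for a threshold $\hatA_t$ growing at least polynomially in $t$.

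In particular, your ``background event'' cannot discard large arrivals when the tagged job is small. An arrival of size at least $t$ in $(0,t/2]$ has probability $\asymp t\Fbar(t)\gg\Fbar(\delta t)$. It typically becomes the anchor and keeps the margin at most $\varepsilon(t)\to0$ for a constant fraction of the horizon.

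This is why the paper says no cutoff $t/C$ works: the non-anchor analysis only covers $x\le\ell_t=o(\varepsilon(\hatA_t)t)$. The paper takes $\hatA_t=e^{(\log t)^2}$ and $\ell_t=t/(\log t)^6$. Jobs in $(\ell_t,\eta t]$ do finish in time, but only because they become the anchor, so your large-job argument must be run down to $\ell_t$. There $\Fbar(\ell_t)/\Fbar(\delta t)\to\infty$, so an $o(1)$ conditional collision probability is not enough. The paper needs the quantitative bound $\P{T_x^{\DynoSPLIT}>t}\le 4r^{>\sqrt x}/\delta+2\lambda\sqrt x\,\Fbar(x)/\delta\le 6r^{>\sqrt{\ell_t}}/\delta$ together with the condition $r^{>\sqrt{\ell_t}}\,\Fbar(\ell_t)=o(\Fbar(\delta t))$. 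This condition pushes $\ell_t$ upward, against the small-job requirement that pushes it down.

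\paragraph{Gap 2: the per-anchor drift scheme.} Even for genuinely small jobs, this scheme does not close. For anchors of bounded size, $z\varepsilon(z)$ is comparable to the fluctuations, so the per-anchor failure probabilities are not small, and a union bound over $O(t)$ anchors fails. The scheme also says nothing about the relevant work already present at the tagged arrival. That work may be backlog left by a huge anchor that has since departed.

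The missing ingredients, as the paper supplies them, are:
\begin{enumerate}
\item the occupancy bound $\P{Z(0)>a}\le 4r^{>a}/\delta$;
\item the recovery debt $D(s)=\sup_{u\le s}\int_u^s(2B(v)-1)\,dv$, with $\P{D(0)>0}\le 8r^{>\hatA_t}/\delta$; on $\{D(0-)=0\}$ it lets you dominate the relevant work by a Loynes queue of constant rate $c_t=r+\varepsilon(\hatA_t)/2$ fed by input capped at $\ell_t$;
\item a single exponential supermartingale bound with $\theta=1/\ell_t$ over the whole horizon, giving $\exp\{-\varepsilon(\hatA_t)t/(16\ell_t)\}$, which is negligible once $\varepsilon(\hatA_t)t/(\ell_t\log t)\to\infty$.
\end{enumerate}

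\paragraph{A smaller slip.} $(\lambda t\,\P{S>\eta t})^2$ is not $o(\P{S>t})$ for $1<\alpha<2$; the ratio grows like $t^{2-\alpha}$ up to slowly varying factors. It becomes negligible only after you multiply by the probability that the tagged job itself has size of order $t$. SRPT requires this for those arrivals to get ahead of it.
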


\subsection{Proof Roadmap}
\label{sec:frac:three-regimes}

The super-server lower bound~\eqref{eq:lower bound super server} already
gives the required lower bound, so the proof only needs a matching upper
bound.  We obtain it by a tagged-job argument: Consider a job arriving to
the stationary system, which we call the \emph{tagged job}, and let \(x\) denote its size; hence \(x\sim S\).  We write \(T_x^{\DynoSPLIT}\) for the response time of the tagged job.  Since \(x\) is distributed as \(S\), we have \(\P{T^{\DynoSPLIT}>t}=\P{T_x^{\DynoSPLIT}>t}\).  We now split the event \(\{T_x^{\DynoSPLIT}>t\}\) according to the size \(x\).

For each response-time threshold \(t\), we introduce two tagged-job size
thresholds.  The lower threshold \(\ell_t\) separates small jobs from
large jobs, and the upper threshold \(h_t\) separates large jobs from huge
jobs.  The response-time tail then satisfies 
\begin{align}
    \P{T^{\DynoSPLIT}>t} = \P{T_x^{\DynoSPLIT}>t}
    &\le \P{T_{x}^{\DynoSPLIT}>t,\ x\le\ell_t}
       \notag\\
    &\quad+\P{T_{x}^{\DynoSPLIT}>t,\ \ell_t<x\le h_t}
       +\P{x>h_t}.
    \label{eq:frac:three-way-split}
\end{align}
For brevity, we write  
\[
    \Fbar(x):=\P{S>x}.  
\]
The proof will choose the thresholds so that, as \(t\to\infty\),
\begin{align}
    \P{T_{x}^{\DynoSPLIT}>t,\ x\le\ell_t}
       &=o\bigl(\Fbar(\delta t)\bigr),
       \label{eq:frac:roadmap-small}\\
    \P{T_{x}^{\DynoSPLIT}>t,\ \ell_t<x\le h_t}
       &=o\bigl(\Fbar(\delta t)\bigr),
       \label{eq:frac:roadmap-large}\\
    \Fbar(h_t)
       &=\bigl(1+o(1)\bigr)\Fbar(\delta t).
       \label{eq:frac:roadmap-huge}
\end{align}
After division by the lower bound, \(\Fbar(\delta t)\), the three terms on the right side of
\eqref{eq:frac:three-way-split} are bounded asymptotically by \(0\), \(0\),
and \(1\).  This gives the required equality in
Theorem~\ref{thm:frac:optimality}, which proves the strong optimality.
The breakdown into small, large, and huge jobs follows the analysis of
\SPLIT in \citet{split}; however, the proof, especially the small-job argument, is far more complex and requires new ideas.

\begin{figure}[t]
\FIGURE
{\includegraphics[width=0.95\textwidth]{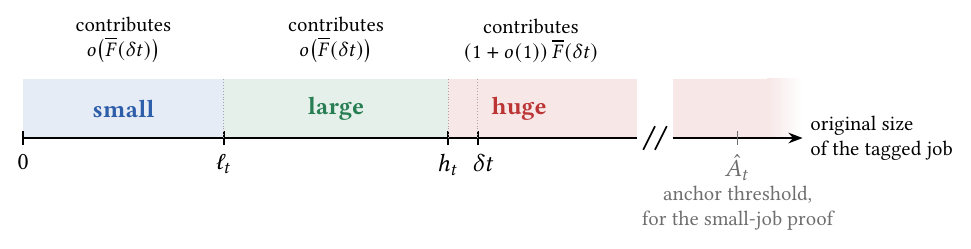}}
{Proof roadmap: the tagged job is small, large, or huge according to its original size.  Not to scale.\label{fig:frac:roadmap}}
{}
% Alt text: A size axis divided into small, large, and huge regions, with each region's contribution to the tail above it and the anchor threshold marked after an axis break.
\end{figure}
We now explain how small, large and huge jobs are handled respectively in the proof, and how the thresholds $\ell_t$ and $h_t$ are carefully chosen. Figure~\ref{fig:frac:roadmap} summarizes the roadmap.

\paragraph{Small jobs.}
In the analysis of \SPLIT \citep{split}, a job is called ``small'' if its size is at
most \(t/C\) for a constant \(C\).  This cutoff works in the lower-load
regime because the \(n-1\) SRPT servers keep the constant margin
\(\delta-1\) regardless of the size of the largest job.  Given a constant
margin, for a suitable constant \(C\), a job's response time exceeds \(C\) times its size
only with negligible probability.  

Two things change under \DynoSPLIT.
First, the margin of the non-anchor side depends on the system state,
through the size of the current anchor.  Second, this margin is arbitrarily
close to zero when the anchor is huge.  % Alan (PDF p.6, l.275, on the old sentence "We instead need a cutoff smaller than t/C. This scale explicitly accounts for the vanishing margin."):
%   "This" crossed out, "scale" -> "scaled", "to" inserted, "accounts" -> "account",
%   i.e. "..., scaled to explicitly account for the vanishing margin".
No constant \(C\) works any more:
The small-job cutoff \(\ell_t\) has to be smaller than \(t/C\) for every
constant \(C\), by a factor that reflects how small the margin can get.

Our solution for overcoming the above two analytic difficulties is to introduce a
threshold, \(\hatA_t\), on the size of the anchor. If the anchor size is upper bounded by \(\hatA_t\), then we know that the non-anchor capacity has
margin at least \(\varepsilon(\hatA_t)\). This allows us to break the proof into two cases: (1) the rare case when the anchor size is larger than \(\hatA_t\), and (2) the common case when the non-anchor jobs have at least \(\varepsilon(\hatA_t)\) extra capacity.

More precisely, for the first case (the rare case), we need the threshold \(\hatA_t\) to be large enough so that the following three exceptional events are rare: First, a job larger than \(\hatA_t\) may already be present
when the tagged job arrives, either as the anchor or waiting to become the
anchor; second, such an anchor may have recently departed, but the system is still
draining the backlog accumulated during its lifetime; third, a new job
larger than \(\hatA_t\) may arrive during \((0,t]\).
Outside these three events, no job larger than \(\hatA_t\) is present
during \([0,t]\), and no backlog left by an earlier such job remains.  The
non-anchor capacity then has margin at least \(\varepsilon(\hatA_t)\).
In this common case, the small-job proof reduces the delay of the tagged job to two sources of interfering work: the work of smaller non-anchor jobs present on arrival, and the work of jobs that are smaller than the tagged job arriving during \((0,t]\). These are the only jobs that SRPT places ahead of the tagged job. We will apply concentration arguments to show that this interfering work is not likely to exceed the extra \(\varepsilon(\hatA_t)t\) units of service available over a horizon of length \(t\).

The choice of the threshold \(\hatA_t\) thus has competing requirements.  It must be large
enough that the three exceptional events above have probability
\(o(\Fbar(\delta t))\) as \(t\to\infty\).  It must also be small enough that
\(\varepsilon(\hatA_t)\) remains large enough to control small-job
delays.

% Once the anchor is controlled, the small-job proof reduces delay to two
% sources of interfering work.  The first is work already present whose
% remaining size allows it to run before the tagged job; the second is new
% interfering work, upper-bounded by capping each arrival at \(\ell_t\).  A
% concentration argument shows that these two sources are unlikely to exceed
% the extra \(\varepsilon(\hatA_t)t\) units of service available over a
% horizon of length \(t\).  Because this extra capacity vanishes relative to
% \(t\), the bound must retain its dependence on both
% \(\varepsilon(\hatA_t)\) and \(\ell_t\).

\paragraph{Large jobs.}
The upper threshold \(h_t\) is dictated by strong optimality.  For huge
jobs, we use only a trivial bound (see below), so they contribute up to
\(\Fbar(h_t)\) to the response time tail.  For this contribution to match
the super-server lower bound \(\Fbar(\delta t)\), we must have \(h_t\) be close to
\(\delta t\), namely \(h_t\sim\delta t\). 

A large tagged job, with size \(x\in(\ell_t,h_t]\), must therefore
finish within time \(t\), even when \(x\) is close to \(\delta t\).  Once
the tagged job becomes the anchor, non-preemption guarantees that it keeps
the role until it completes, and it is served at rate at least
\(\gamma(x)\), which approaches \(\delta\) as \(x\) grows.  Before that,
however, the tagged job must wait for the current anchor to complete,
again because the anchor role is non-preemptive.  We therefore choose
\(h_t\) slightly below \(\delta t\): Serving a large job at rate
\(\gamma(x)\) takes at most time \(t\) minus a slack of \(o(t)\), and this
slack is reserved for the wait. Because the slack is \(o(t)\) and
\(\gamma(x)\) approaches \(\delta\), this choice is compatible with
\(h_t\sim\delta t\).
 
A large job can thus
miss its deadline only if it collides with another large job. Specifically, the tagged large job only misses its deadline if (1) when it arrives, the current anchor is too large to complete within the $o(t)$ slack, or if (2) a
job larger than \(x\) arrives during the wait and takes the anchor role
first.  Both collisions are rare, and making these probabilities negligible pushes
% and the first is the dominant one.  Averaging over the tagged size, the
% large-job contribution is at most the probability that the tagged job is
% large, times the fraction of time during which a job larger than
% \(\sqrt{\ell_t}\) is present.  Making this product negligible pushes 
% # The above sentences are proof details, not crucial.
\(\ell_t\) upward, in direct tension with the small-job proof, which needs
\(\ell_t\) to be sufficiently small.

\paragraph{Huge jobs.}
Huge jobs supply the unavoidable tail contribution. For a tagged job with
\(x>h_t\), we use only the trivial bound
\(\P{T_x^{\DynoSPLIT}>t,\ x>h_t}\le\P{x>h_t}=\Fbar(h_t)\).  Since
\(h_t\sim\delta t\), this contribution asymptotically matches the
super-server lower bound.

Section~\ref{sec:frac:proof} follows this roadmap in the same order.  Each
condition on the still-unspecified margin and thresholds is introduced only
when the small-, large-, or huge-job argument forces it. Only after all the lemmas are proved does
Theorem~\ref{thm:frac:explicit} choose explicit formulas for
\(\varepsilon\), \(\hatA_t\), \(\ell_t\), and \(h_t\) and verify the
accumulated conditions (See equation \eqref{eq:frac:explicit-thresholds}).  
% !TEX root = ../main.tex
\section{Proof of Strong Tail Optimality}
\label{sec:frac:proof}

In this section, we prove Theorem~\ref{thm:frac:optimality} by following
the roadmap of Section~\ref{sec:frac:three-regimes}.
Section~\ref{sec:frac:small} proves the small-job
bound~\eqref{eq:frac:roadmap-small}, Section~\ref{sec:frac:large} proves the
large-job bound~\eqref{eq:frac:roadmap-large}, and
Section~\ref{sec:frac:huge} proves the huge-job
bound~\eqref{eq:frac:roadmap-huge}.  Section~\ref{sec:frac:combining}
chooses the capacity margin and the thresholds and combines the three
bounds into the proof of Theorem~\ref{thm:frac:optimality}.  Throughout
the first three subsections, the capacity margin \(\varepsilon\) and the
thresholds \(\hatA_t\), \(\ell_t\), and \(h_t\) remain unspecified.

We consider the system in steady state.  The tagged job of Section~\ref{sec:frac:three-regimes}, with
size \(x\) and response time \(T_x^{\DynoSPLIT}\), arrives at time zero.
% Auxiliary processes introduced from time zero need not themselves be stationary. # this sentence is very confusing: auxiliary process? what are the authors talking about?

Unless another variable is indicated, every limit and asymptotic relation
in this section and in Appendix~\ref{app:frac:rv}, including \(\to\), \(\sim\),
\(\ll\), and \(o(\cdot)\), is taken as \(t\to\infty\).

Throughout this section, the capacity margin satisfies
\eqref{eq:frac:margin-basic} together with the following two shape
requirements:
\begin{equation}
    \varepsilon(z)\le\frac{\delta}{2}
    \quad\text{for every \(z\ge0\)},
    \qquad
    \varepsilon\ \text{is nonincreasing}.
    \label{eq:frac:condition-margin-shape}
\end{equation}
The first requirement guarantees the anchor a service rate of at least
\(\delta/2\) whatever its size.  Three places in the proof need such a uniform
floor: the bound on how often a large job is present in
Section~\ref{sec:frac:recovery}, the bound on the tagged job's own
progress in Section~\ref{sec:frac:capped-work}, and the bound on how long
a large job waits for the anchor role in Section~\ref{sec:frac:large}.  The fraction \(1/2\) is
adopted only for ease of analysis.\endnote{Any fixed fraction of
\(\delta\) would serve the same purpose and only changes the constants.}
The second requirement is what turns an upper bound on the anchor size into
a lower bound on the non-anchor margin: If the anchor has size at most
\(a\), the non-anchor jobs have margin at least \(\varepsilon(a)\).  This is
how the small-job proof obtains a uniform margin from a bound on the anchor.
Neither requirement says how quickly the margin vanishes; that is decided in
Section~\ref{sec:frac:combining}.

\subsection{Small jobs: a vanishing margin still suffices}
\label{sec:frac:small}

We now prove the small-job bound~\eqref{eq:frac:roadmap-small}.  Recall the
obstacle described in the roadmap: the margin available to the non-anchor
jobs depends on the current anchor, and it can be arbitrarily close to
zero.  The proof follows the roadmap.  Step~1
(Section~\ref{sec:frac:recovery}) shows that the three exceptional events of the
roadmap, namely a job larger than \(\hatA_t\) present on
arrival, a backlog left behind by such a job, and an arrival larger than
\(\hatA_t\) during the horizon, together have probability
\(o(\Fbar(\delta t))\).  Outside these events, the non-anchor side has a
uniform margin \(\varepsilon(\hatA_t)\) over the whole horizon.  Step~2
(Section~\ref{sec:frac:capped-work}) shows that, with this uniform margin,
the tagged job is late only if the two sources of interfering work from the
roadmap, namely the relevant work present on arrival and the capped work
arriving afterwards, deviate from their mean by a lot.  Step~3 (Section~\ref{sec:frac:concentration})
% shows by
% concentration that
% this is exponentially unlikely, with the vanishing margin
% kept in the exponent.  Figure~\ref{fig:frac:small-reduction} summarizes the
% chain of reductions.
finally applies concentration bounds on the above two terms of work to finish the proof.
% # The original sentence has several problems: "this" means what? Also, "the vanishing margin kept in the margin" also means nothing -- It is too detailed. Finally, the figure is not helpful at all. The figure does not establish any intuitive figures in readers' mind, and is far too technical since it includes so many equations.

% \begin{figure}[t]
%     \centering
%     \includegraphics[width=0.95\linewidth]{figs/small_job_reduction.pdf}
%     \caption{The chain of reductions in the small-job proof.  Each box is
%     an event containing the one above it, up to the bad events discarded in
%     Step~1, and each arrow names the step or lemma that proves the
%     inclusion.  The final two deviations are controlled by
%     Lemma~\ref{lem:frac:concentration}.}
%     \Description{Five stacked boxes connected by downward arrows.  The top
%     box is the event that a small tagged job is late.  The arrows are
%     labeled by the steps of the small-job proof, and the bottom box is the
%     union of two deviation events.}
%     \label{fig:frac:small-reduction}
% \end{figure}

% \subsubsection{Step 1: Discard histories with a red anchor}
\subsubsection{Step 1: Discard exceptional events}
\label{sec:frac:recovery}
We first show that the three exceptional events above are rare.
The main result of this step is Corollary~\ref{cor:frac:good-horizon}.

Fix the horizon \(t\).  The roadmap controls the small-job margin through
the anchor threshold \(\hatA_t\): While the anchor has size at most
\(\hatA_t\), the non-anchor jobs have margin at least
\(\varepsilon(\hatA_t)\) by~\eqref{eq:frac:condition-margin-shape}.  The
value of \(\hatA_t\) remains unspecified.
Nevertheless, we should imagine \(\hatA_t\) to be very large. Specifically, we require
% For the estimates of this step
% to apply, and so that a small tagged job never exceeds the anchor threshold
% itself, we require
\begin{equation}
    \hatA_t\longrightarrow\infty,
    \qquad
    \ell_t\ll\hatA_t.
    % \quad\text{for all sufficiently large \(t\)}.
    \label{eq:frac:condition-cap}
\end{equation}

We begin with the first exceptional event: A job with size \(>\hatA_t\) is present when the tagged job arrives.  Let \(Z(s)\) be the largest original job size
present at time \(s\), where \(Z(s)=0\) when the system is empty. Recall that \(Z(0)\) represents the stationary distribution of $Z(s)$, which is what the tagged job sees when it arrives.
The following lemma bounds how often a job of size \(>a\) is present, for a general threshold \(a\); in
this first step we only use it for \(a=\hatA_t\), but the large-job proof
in Section~\ref{sec:frac:large} reuses it at a different threshold.
% Call a job
% \emph{red} if its original size exceeds a given threshold \(a\); the anchor
% exceeds \(a\) exactly when a red job is present.  The following bound is
% stated for a general threshold because the large-job proof reuses it at a
% different threshold.
% # There is no need to introduce the concept of red. It is not the core thing here. It is also confusing because a is gonna change.

\begin{lemma}
\label{lem:frac:anchor-occupancy}
For every \(a>0\),
\begin{equation}
    \P{Z(0)>a}
    \le \frac{4r^{>a}}{\delta}.
    \label{eq:frac:anchor-occupancy}
\end{equation}
where \(r^{>a}\) is the part of the resource requirement contributed by
jobs larger than \(a\), defined in~\eqref{eq:r_x}.
\end{lemma}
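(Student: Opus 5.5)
The plan is to use PASTA to turn \(\P{Z(0)>a}\) into the long-run fraction of time during which some job of original size \(>a\) is present. Call such jobs \emph{big}. Over a long horizon \([0,\tau]\), I would bound the total time with a big job present by \(4\tau r^{>a}/\delta\) plus lower-order terms. Then I divide by \(\tau\) and let \(\tau\to\infty\), using stationarity and ergodicity of the stable system.

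The key structural observation concerns the maximal intervals during which at least one big job is present; call them \emph{big periods}. Within a big period, the anchor fails to be big only during an initial stretch. Each big period begins with the arrival of a big job. If the system is empty at that moment, that job becomes the anchor by Definition~\ref{def:frac:anchor}. Otherwise, the current anchor, of some size \(z\), may be non-big (\(z\le a\)).

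\begin{itemize}
\item \emph{The initial stretch is short.} A non-big anchor is served at rate \(\gamma(z)\ge\delta/2\) by~\eqref{eq:frac:condition-margin-shape} when more than \(n\) jobs are present. It is served at rate \(1\ge\delta/2\) otherwise, since \(\delta\le1\). So it completes within time \(2a/\delta\).
\item \emph{After that, the anchor stays big.} When the non-big anchor completes, the job with the largest original size becomes the anchor, and this job is big. Likewise, whenever a big anchor completes while another big job is present, the new anchor is again big. Hence the anchor remains big until the big period ends.
\end{itemize}

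I then bound the two kinds of time separately.

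\begin{itemize}
\item \emph{Time with a big anchor.} While the anchor is big, it is served at rate at least \(\delta/2\), and each unit of this service is big work. The total big work arriving in \([0,\tau]\) is about \(\tau r^{>a}\). So this time is at most \((2/\delta)\tau r^{>a}\), plus boundary terms that are \(o(\tau)\).
\item \emph{Initial stretches.} The number of big periods in \([0,\tau]\) is at most the number of big arrivals, about \(\tau\lambda\P{S>a}\). Each contributes at most \(2a/\delta\). The resulting bound \(2a\lambda\P{S>a}\tau/\delta\) is at most \(2\tau r^{>a}/\delta\), because \(a\P{S>a}\le\E{S\one\{S>a\}}\).
\end{itemize}

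Adding the two bounds gives \(4r^{>a}/\delta\).

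The main obstacle is bookkeeping rather than ideas. I expect three points to need care. First, the initial non-big anchor may have arrived before the big period began; the bound \(2a/\delta\) still holds because its entire service time is at most \(2a/\delta\). Second, the rate floor \(\delta/2\) must be checked in both regimes of Definition~\ref{def:frac:policy}, namely at most \(n\) jobs present and more than \(n\) jobs present. Third, the sample-path time average must be identified with the stationary probability via PASTA and the ergodic theorem. If the ergodic argument feels heavy, I would phrase the two bounds instead as Little's-law or rate-conservation inequalities in stationarity, which yield the same constants directly.
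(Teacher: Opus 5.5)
Your proposal is correct and follows essentially the same argument as the paper. Both use the maximal periods with big work present, a $2a/\delta$ bound on the initial non-big anchor via the rate floor $\delta/2$, and a $2/\delta$ bound on big-anchor time in terms of arriving big work. The only difference is bookkeeping: the paper absorbs the initial stretch within each period using $W>a$ (so $B<4W/\delta$), whereas you sum the initial stretches over periods and invoke $a\Fbar(a)\le\E{S\one\{S>a\}}$, which gives the same constant.
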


\begin{proof}
Let \(W_a(s)\) be the total remaining work in jobs with original
size exceeding \(a\) at time \(s\); this work enters the system at mean
rate \(r^{>a}\), and the events \(\{W_a(0)>0\}\) and \(\{Z(0)>a\}\) are
identical.

Consider a period during which \(W_a>0\).  It starts when a
job larger than \(a\) arrives while no such job is present, and it ends
when \(W_a\) returns to zero; Figure~\ref{fig:frac:large-work} illustrates
one such period.
The job that starts the period joins the non-anchor
side, and the current anchor, if any, has original size at most \(a\).
The current anchor is served at rate one when at most \(n\) jobs are
present, and at rate \(\gamma\ge\delta/2\) otherwise,
by~\eqref{eq:frac:condition-margin-shape}, where \(\delta\le1\), so the
current anchor completes within \(2a/\delta\) time units.  From
then on, whenever the anchor completes while \(W_a>0\), the largest job
present is larger than \(a\) and becomes the new anchor.  Hence, after an
initial delay of at most \(2a/\delta\), the anchor is a job larger than
\(a\) for the rest of the period, and \(W_a\) decreases at rate at least
\(\delta/2\).  Other jobs larger than \(a\) may also be served on the
non-anchor side, which only makes \(W_a\) decrease faster.  Let \(B\) be
the length of the period where \(W_a>0\), and let \(W\) be the
total work of jobs larger than \(a\) that arrive during this period, including the
job that starts it.  Then \(W>a\), and
\[
    B\le\frac{2a}{\delta}+\frac{2W}{\delta}
    <\frac{4W}{\delta}.
\]
Summing over the successive $W_a>0$ periods in \([0,T]\) and letting
\(T\to\infty\), the long-run fraction of time with \(W_a>0\) is at most
\(4/\delta\) times the long-run rate at which work of jobs larger than
\(a\) arrives, which is \(r^{>a}\).  Since we consider the system in steady
state, this long-run fraction of time equals
\(\P{W_a(0)>0}\), which proves \eqref{eq:frac:anchor-occupancy}.
\end{proof}
 
\begin{figure}[t]
\FIGURE
{\includegraphics[width=0.7\textwidth]{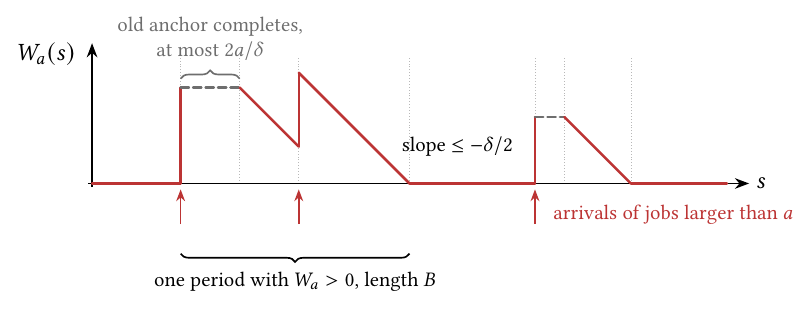}}
{One period during which \(W_a(s)>0\).\label{fig:frac:large-work}}
{}
% Alt text: A piecewise-linear curve that jumps at arrivals of large jobs, stays flat for a short dashed stretch, and then decreases until it reaches zero.
\end{figure}

% # It makes more sense to define red here, since now we only talk about Ahat.
The second exceptional event is subtler. Call a job \emph{red} if its original size exceeds \(\hatA_t\). A red anchor reduces the margin while it
is present, and the backlog accumulated during that time may remain after
the anchor departs.
%  Excluding only the event \(\{Z(0-)>\hatA_t\}\) is therefore not enough.
Hence, when a tagged job arrives at the system, even if it does not see a red anchor, it might be greatly delayed by the backlog accumulated before the last red anchor departs.
 We track the time the system needs, after a period
with a red anchor, before the surplus capacity of the following periods has
compensated for the lost margin. After this amount of time, the effect of a red anchor is cleared.

To make this precise, we first record the capacity of the non-anchor side.
When there are at least \(n\) non-anchor jobs, the capacity available to
them is at least
\begin{equation}
    \mu_{\mathrm{na}}(s)
    :=r+\varepsilon(Z(s)).
    \label{eq:frac:mu-na}
\end{equation}
Indeed, the anchor is one of the jobs present, so its original size
is at most \(Z(s)\), and \(\varepsilon\) is nonincreasing; the capacity
\(n-\gamma\) left to the non-anchor jobs by an anchor of size at most
\(Z(s)\) is therefore at least \(r+\varepsilon(Z(s))\).  We work with
this lower bound throughout.
% The word ``available'' matters here: if fewer than \(n\) non-anchor jobs
% are present, they may not use all of \(\mu_{\mathrm{na}}(s)\).

We call time \(s\) \emph{bad} if a red job is present at time \(s\), that is, if \(Z(s)>\hatA_t\).  We call time \(s\) \emph{good} if it is not bad.  During good time, the available
non-anchor capacity is at least \(r+\varepsilon(\hatA_t)\); during bad time,
it is at least \(r\).  We compare both with their midpoint, which we call the
\emph{benchmark rate} $c_t$: 
\begin{equation}
    m_t:=\frac{\varepsilon(\hatA_t)}{2},
    \qquad
    c_t:=r+m_t.
    \label{eq:frac:comparison-parameters}
\end{equation}
Good time then has a surplus of at least \(m_t\) above the benchmark rate,
and bad time has a deficit of at most \(m_t\) below it.  Hence one unit of
good time compensates for one unit of bad time, when both are measured
relative to \(c_t\).
Note that the benchmark rate still has a capacity margin $m_t$ over the service requirement $r$, which allows for the concentation arguments in Step 3.

The \emph{recovery debt} makes this accounting pathwise.  Write
\(B(s):=\one\{Z(s)>\hatA_t\}\) for the indicator of bad time, and, on a
two-sided stationary sample path, let
\begin{equation}
    D(s)
    :=\sup_{u\le s}\int_u^s\bigl(2B(v)-1\bigr)\,dv.
    \label{eq:frac:recovery-debt}
\end{equation}
As Figure~\ref{fig:frac:recovery-debt} illustrates, the debt
rises at rate one during bad time, which is shown in red, falls at rate one
during good time, shown in green, while it is positive, and stays at zero
once it has been repaid.  We call the system \emph{recovered} at time \(s\)
if \(D(s)=0\). Hence, a system can be good but not recovered, in which case we call it \emph{recovering}, and any recovered system is good.  In Figure~\ref{fig:frac:recovery-debt}, recovered periods are shaded in blue and recovering periods in green.

The following lemma bounds the fraction of time when the system is not recovered. From Figure \ref{fig:frac:recovery-debt}, it is intuitive to see that the fraction of unrecovered time is at most double the fraction of bad time; this fact, together with Lemma \ref{lem:frac:anchor-occupancy}, completes the proof.

\begin{figure}[t]
\FIGURE
{\includegraphics[width=0.95\textwidth]{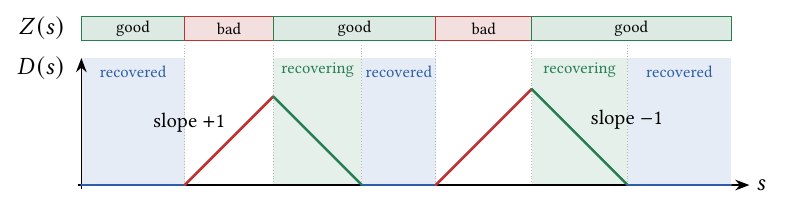}}
{Good and bad time, the recovery debt \(D(s)\), and the recovered and recovering periods.\label{fig:frac:recovery-debt}}
{}
% Alt text: A time line alternates between good and bad intervals; the debt rises during bad intervals and falls during good intervals. Periods with zero debt are shaded blue and labeled recovered; good periods with positive debt are shaded green and labeled recovering.
\end{figure}

\begin{lemma}[Recovery debt and effective capacity]
\label{lem:frac:recovery}
Under condition~\eqref{eq:frac:condition-cap}, for all sufficiently large
\(t\), the stationary recovery debt is finite and satisfies
\begin{equation}
    \P{D(0)>0}
    \le \frac{8r^{>\hatA_t}}{\delta}.
    \label{eq:frac:recovery-probability}
\end{equation}
Moreover, if \(D(s)=0\), then the capacity bound \(\mu_{\mathrm{na}}\)
over every backward interval ending at \(s\) satisfies
\begin{equation}
    \int_u^s\mu_{\mathrm{na}}(v)\,dv
    \ge c_t(s-u)
    \qquad\text{for every \(u\le s\)}.
    \label{eq:frac:backward-capacity}
\end{equation}
\end{lemma}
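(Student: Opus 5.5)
The plan is to handle the two claims separately. Both rest on the pathwise structure of the recovery debt $D(s)=\sup_{u\le s}\int_u^s(2B(v)-1)\,dv$, which is a reflected (Lindley-type) process driven by the net rate $2B-1\in\{+1,-1\}$. The backward-capacity inequality~\eqref{eq:frac:backward-capacity} is deterministic. The probability bound~\eqref{eq:frac:recovery-probability} combines a renewal/ergodic counting argument with Lemma~\ref{lem:frac:anchor-occupancy} at $a=\hatA_t$.

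For~\eqref{eq:frac:backward-capacity}, suppose $D(s)=0$. By definition this means $\int_u^s(2B(v)-1)\,dv\le 0$ for every $u\le s$. Equivalently, the bad time in $[u,s]$ is at most half of $s-u$, so the good time is at least the bad time. During good time $Z(v)\le\hatA_t$, and since $\varepsilon$ is nonincreasing by~\eqref{eq:frac:condition-margin-shape}, we get $\mu_{\mathrm{na}}(v)\ge r+2m_t=c_t+m_t$. During bad time $\mu_{\mathrm{na}}(v)\ge r=c_t-m_t$, because $\varepsilon>0$. Hence
\[
\int_u^s\mu_{\mathrm{na}}(v)\,dv-c_t(s-u)\ \ge\ m_t\int_u^s\bigl(1-2B(v)\bigr)\,dv\ \ge\ 0,
\]
which is exactly~\eqref{eq:frac:backward-capacity}.

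For~\eqref{eq:frac:recovery-probability}, I will first establish finiteness of the stationary $D$. Lemma~\ref{lem:frac:anchor-occupancy} gives $\E{B(0)}=\P{Z(0)>\hatA_t}\le 4r^{>\hatA_t}/\delta$. Condition~\eqref{eq:frac:condition-cap} gives $\hatA_t\to\infty$, and $\E{S}<\infty$ gives $r^{>\hatA_t}\to0$, so for large $t$ the fraction of bad time is below $1/2$. The driving process $2B-1$ is then stationary with negative mean, and Loynes' argument, using ergodicity of the stationary regenerative M/G/$n$ path, shows that the supremum defining $D(0)$ is almost surely finite. Next I will bound the fraction of unrecovered time pathwise. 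Decompose the timeline into excursions of $D$ away from zero. Each excursion begins at the start of a bad interval. Within an excursion, $D$ rises at rate one on bad time and falls at rate one on good time, and it ends when it returns to zero, so the good time in the excursion equals its bad time. Hence each excursion has length exactly twice its bad time, and every bad instant belongs to some excursion. Summing over excursions in $[0,T]$ and discarding at most one incomplete excursion at each end, which is negligible after dividing by $T$ because $D$ is finite, the long-run fraction of time with $D>0$ is at most twice the long-run fraction of bad time. Stationarity and ergodicity then give $\P{D(0)>0}\le 2\P{Z(0)>\hatA_t}\le 8r^{>\hatA_t}/\delta$.

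The main obstacle is making the stationarity and ergodic step rigorous rather than pictorial. It requires that $(B(s))$ on the two-sided stationary path is stationary and ergodic, that $D$ defined by the backward supremum is finite and stationary, and that boundary excursions do not affect time averages. My first approach would be to use regenerations of the M/G/$n$ at empty-system epochs. These have finite mean cycle length by stability, which gives ergodicity of $B$. The strict inequality $\E{B}<1/2$ for large $t$ then gives Loynes-type finiteness of $D$ and ensures that $D$ hits zero infinitely often, so the excursion counting is valid.
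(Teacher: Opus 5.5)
Your proposal is correct and follows essentially the same route as the paper. The capacity inequality comes from the pointwise bound $c_t-\mu_{\mathrm{na}}(v)\le m_t(2B(v)-1)$ integrated over $(u,s]$. The probability bound comes from Loynes finiteness plus an ergodic comparison of falling time with rising time, combined with Lemma~\ref{lem:frac:anchor-occupancy}.

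The only difference is that the paper does the accounting globally, through the identity $D(T)-D(0)=\int_0^T B(v)\,dv-\int_0^T\one\{B(v)=0,\ D(v)>0\}\,dv$. This gives $\int_0^T\one\{B=0,D>0\}\,dv\le\int_0^T B\,dv+D(0)$ directly and spares you the excursion bookkeeping. Your justification that the final incomplete excursion is negligible ``because $D$ is finite'' is loose, but you do not need it: on that partial excursion $D(T)\ge 0$ forces good time to be at most bad time, so including it preserves the bound.
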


That is, in a recovered state the non-anchor jobs have received at least the
benchmark rate over every window ending at the present time.

\begin{proof}
Lemma~\ref{lem:frac:anchor-occupancy} at \(a=\hatA_t\), together with
\(\hatA_t\to\infty\) and \(r^{>a}\to0\) as \(a\to\infty\), gives
\(\P{B(0)=1}<1/2\) for all sufficiently large \(t\).  This is what makes the debt finite and \(D(s)<\infty\) almost surely.  This is the Loynes
construction~\citep{loynes1962stability}, and \(D\) is a stationary
functional of the stationary sample path.

We now compare the time the debt spends rising with the time it spends
falling.  Over any interval \([0,T]\), the debt rises at rate one during
bad time and falls at rate one during good time while positive, so
\[
    D(T)-D(0)
    =\int_0^T B(v)\,dv
     -\int_0^T\one\{B(v)=0,\ D(v)>0\}\,dv.
\]
Since \(D(T)\ge0\), the time spent falling is at most the time spent
rising plus \(D(0)\).  Now divide by \(T\) and let \(T\to\infty\).  The two
integrands are bounded stationary processes, so the ergodic theorem gives
their time averages, and \(D(0)/T\to0\) because \(D(0)<\infty\).  Hence we
have that 
\[
    \P{B(0)=0,\ D(0)>0}
    \le\P{B(0)=1}.
\]
Adding \(\P{B(0)=1,\ D(0)>0}\le\P{B(0)=1}\) to both sides and applying
Lemma~\ref{lem:frac:anchor-occupancy} gives
\[
    \P{D(0)>0}
        \le 2\P{B(0)=1}
    \le \frac{8r^{>\hatA_t}}{\delta}.
\]

It remains to connect recovery debt with capacity.  By the definition of
the benchmark rate and the monotonicity of \(\varepsilon\),
\[
    c_t-\mu_{\mathrm{na}}(v)
    =m_t-\varepsilon(Z(v))
    \le
    \begin{cases}
       -m_t, & Z(v)\le \hatA_t,\\
        m_t, & Z(v)>\hatA_t.
    \end{cases}
\]
Equivalently,
\begin{equation}
    c_t-\mu_{\mathrm{na}}(v)
    \le m_t\bigl(2B(v)-1\bigr).
    \label{eq:frac:pointwise-capacity}
\end{equation}
If \(D(s)=0\), definition~\eqref{eq:frac:recovery-debt} says that
\(\int_u^s(2B(v)-1)\,dv\le0\) for every \(u\le s\).  Integrating both sides of~\eqref{eq:frac:pointwise-capacity} over \((u,s]\) when $D(s)=0$ yields
\[
    c_t(s-u)-\int_u^s\mu_{\mathrm{na}}(v)\,dv\le0,
\]
which is exactly~\eqref{eq:frac:backward-capacity}.
\end{proof}
 
We can now exclude all three exceptional events at once: a red job present when the tagged job arrives, a backlog left behind by an earlier red job, and a red job arriving during \((0,t]\).  In essence, none of them occurs if the system is recovered throughout the horizon \([0,t]\).  Formally, let
\begin{equation}
    \mathcal G_t:=
    \left\{
        D(0-)=0,\quad
        \substack{\text{no arrival of size greater than }\hatA_t\\
                  \text{occurs in }(0,t]}
    \right\}.
    \label{eq:frac:good-horizon}
\end{equation}
% On \(\mathcal G_t\), every job present at time \(0-\) or arriving during
% \((0,t]\) has original size at most \(\hatA_t\).  Since a small tagged
% job also satisfies \(S\le\ell_t\le\hatA_t\)
% by~\eqref{eq:frac:condition-cap}, the anchor never exceeds \(\hatA_t\)
% on \([0,t]\).

\begin{corollary}[Exceptional events are rare]
\label{cor:frac:good-horizon}
Under condition~\eqref{eq:frac:condition-cap}, for all sufficiently large
\(t\),
\begin{equation}
    \P{\mathcal G_t^c}
    \le \frac{8r^{>\hatA_t}}{\delta}
       +\lambda t\Fbar(\hatA_t).
    \label{eq:frac:good-horizon-bound}
\end{equation}
\end{corollary}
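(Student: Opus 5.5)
The plan is a union bound over the two conditions that define \(\mathcal G_t\) in~\eqref{eq:frac:good-horizon}:
\[
    \P{\mathcal G_t^c}\le \P{D(0-)>0}+\P{\text{some arrival in }(0,t]\text{ has size greater than }\hatA_t}.
\]
Each of the two terms on the right is then bounded separately.

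For the first term, I use the fact that the tagged job arrives at time zero to the stationary system. By PASTA, the state it sees just before arrival, including the recovery debt \(D(0-)\), has the stationary distribution. The debt \(D\) is a stationary functional of the two-sided stationary sample path, so \(D(0-)\) has the same law as the stationary debt. Lemma~\ref{lem:frac:recovery} applies under condition~\eqref{eq:frac:condition-cap} for all sufficiently large \(t\), and it gives \(\P{D(0-)>0}\le 8r^{>\hatA_t}/\delta\). The only point needing care here is that PASTA applies to \(D\). This holds because \(D(0-)\) depends only on the path strictly before the tagged arrival, and that past is independent of the Poisson arrival at time zero.

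For the second term, arrivals of size greater than \(\hatA_t\) form a thinned Poisson process of rate \(\lambda\Fbar(\hatA_t)\). The number of such arrivals in \((0,t]\) therefore has mean \(\lambda t\Fbar(\hatA_t)\). By Markov's inequality (or \(1-e^{-x}\le x\)), the probability that at least one occurs is at most \(\lambda t\Fbar(\hatA_t)\).

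Adding the two bounds gives~\eqref{eq:frac:good-horizon-bound}. I expect no real obstacle. The only subtle step is the first one: justifying that the tagged job sees the stationary debt, so that the time-stationary bound of Lemma~\ref{lem:frac:recovery} applies at the arrival instant.
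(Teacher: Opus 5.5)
Your proposal is correct and matches the paper's proof: a union bound over the two defining conditions of $\mathcal G_t$, then PASTA with Lemma~\ref{lem:frac:recovery} to bound $\P{D(0-)>0}$ by $8r^{>\hatA_t}/\delta$. The arrival term is handled, as in the paper, by a Poisson count with mean $\lambda t\Fbar(\hatA_t)$ and the inequality $1-e^{-y}\le y$.
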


\begin{proof}
By PASTA, the tagged arrival sees the stationary probabilities in
Lemma~\ref{lem:frac:recovery}.  A union
bound gives
\begin{align*}
    \P{\mathcal G_t^c}
    &\le \P{D(0-)>0}
       +\P{\substack{\text{an arrival larger than }\hatA_t\\
                     \text{occurs in }(0,t]}}\\
        &\le \frac{8r^{>\hatA_t}}{\delta}
       +\lambda t\Fbar(\hatA_t).
\end{align*}
For the last term, the number of such arrivals is Poisson with mean
\(\lambda t\Fbar(\hatA_t)\), and
\(1-e^{-y}\le y\).  This proves~\eqref{eq:frac:good-horizon-bound}.
\end{proof}

The bound~\eqref{eq:frac:good-horizon-bound} reveals the first asymptotic
requirement on \(\hatA_t\), in the direction the roadmap announced: the
threshold must be large enough that the exceptional events are negligible on the scale of
the target tail. Formally,
\begin{equation}
    r^{>\hatA_t}
    +t\Fbar(\hatA_t)
    =o\bigl(\Fbar(\delta t)\bigr).
    \label{eq:frac:condition-anchor}
\end{equation}
Under this requirement,
\begin{equation}
    \P{\mathcal G_t^c}
    =o\bigl(\Fbar(\delta t)\bigr).
    \label{eq:frac:good-horizon-probability}
\end{equation}
The event \(\mathcal G_t\) hands two facts to the next step: the system is recovered
at time zero, so the backlog present then is controlled with high probability, and the anchor stays below \(\hatA_t\) over the horizon, so the non-anchor side has the
uniform margin \(2m_t\) throughout.

\subsubsection{Step 2: Reduce response time to capped relevant work}
\label{sec:frac:capped-work}
% # The previous starting sentence is obviously incorrect.
% We now carry out the second reduction of the roadmap and turn the uniform
% margin into a bound on the tagged job's response time.
We now bound the tagged job's response time by the relevant work it sees and the relevant work that arrives; here relevant work is the work composed of jobs with remaining size smaller than the tagged job, to be defined in the next paragraph. Lemma \ref{lem:frac:tagged-work-bound} formalizes this argument, and Lemma \ref{lem:frac:work-comparison} uses the fact that the system is recovered to bound the relevant work the tagged job sees. Finally, we get the main lemma of this step 2, Lemma \ref{lem:frac:small-pathwise}.

% # Need to start with DynoSPLIT
The starting point
is that under \DynoSPLIT, because most servers do SRPT, a small job only sees small work.  A tagged job of size
at most \(\ell_t\) can be delayed on the non-anchor side only by non-anchor
jobs whose remaining size is at most \(\ell_t\), because those are the only
jobs that SRPT can place ahead of it. We are going to prove that the capacity that \DynoSPLIT gives to the non-anchor side is enough to guarantee a small response time of the tagged job.

The
right quantity to track is therefore the \emph{relevant work}.  Let
\(R_j(s)\) be job \(j\)'s remaining size at time \(s\), and let
\(j^\star(s)\) be the anchor.  Define
\begin{equation}
    W^{\mathrm{rel}}(s)
    :=\sum_{\substack{j\text{ present at }s\\j\ne j^\star(s)}}
       R_j(s)\one\{R_j(s)\le \ell_t\}.
    \label{eq:frac:relevant-work}
\end{equation}
This quantity includes all non-anchor work that can run ahead of a tagged
job of size at most \(\ell_t\).  It may also include some work that in fact
runs behind the tagged job, which is harmless for an upper bound.

New work matters through the same lens.  A job arriving during the horizon
can run ahead of the tagged job only during the arriving job's last \(\ell_t\) units of
service, so each arrival can create at most its size, capped at \(\ell_t\),
as new relevant work.  For job \(i\), let \(S_i\) and \(\tau_i\) denote its
original size and arrival time, and define the \emph{capped arriving work}
over \((u,v]\) and its mean input rate by
\begin{equation}
    \mathcal A(u,v]
    :=\sum_{i:\,u<\tau_i\le v}(S_i\wedge \ell_t),
    \qquad
    r_{\ell_t}:=\lambda\E{S\wedge \ell_t},
    \label{eq:frac:capped-arrivals}
\end{equation}
where $X\wedge Y:= \min\{X,Y\}.$
Capping does not describe a modified scheduling policy; it only bounds the
new work that can acquire priority over the tagged job.

The non-anchor capacity varies with the anchor, which makes the relevant
work hard to compare with a standard queue directly.  The recovery lemma, Lemma~\ref{lem:frac:recovery}, 
lets us replace the varying capacity by the constant benchmark rate
\(c_t\).  We therefore define
\begin{equation}
    U(s)
    :=\sup_{u\le s}
       \bigl\{\mathcal A(u,s]-c_t(s-u)\bigr\}.
    \label{eq:frac:comparison-workload}
\end{equation}
This is the stationary workload, constructed from the infinite past, of a
single server with rate \(c_t\) fed by the capped compound-Poisson input.
It is finite because \(c_t>r\ge r_{\ell_t}\), by the usual Loynes
construction~\citep{loynes1962stability}.  At the tagged arrival, \(U(0-)\)
denotes the value just before the tagged job enters.

The first lemma controls the relevant work present when the tagged job
arrives.

\begin{lemma}[Relevant-work comparison]
\label{lem:frac:work-comparison}
On \(\{D(s)=0\}\),
\begin{equation}
    W^{\mathrm{rel}}(s)
    \le (n+1)\ell_t+U(s).
    \label{eq:frac:work-comparison}
\end{equation}
\end{lemma}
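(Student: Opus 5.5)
The plan is a pathwise, Loynes-type argument: go back to the last time the relevant work was essentially drained, and compare the non-anchor side with a single server of rate \(c_t\) from that moment. Fix \(s\) with \(D(s)=0\). Define
\[
u^\star:=\sup\{u\le s:\ \text{fewer than } n+1 \text{ non-anchor jobs with remaining size}\le\ell_t \text{ are present at } u\},
\]
taking \(u^\star=-\infty\) if no such time exists. Stationarity and stability give \(u^\star>-\infty\) almost surely, since the system empties infinitely often in the past. At time \(u^\star\) (more precisely, at \(u^\star-\)) there are at most \(n\) non-anchor jobs with remaining size at most \(\ell_t\). Hence \(W^{\mathrm{rel}}(u^\star-)\le n\ell_t\). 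A possible jump at \(u^\star\) adds at most one job of relevant work, at most \(\ell_t\). This yields the \((n+1)\ell_t\) term.

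Next, I track the growth of relevant work on \((u^\star,s]\). Relevant work increases only in two ways:
\begin{itemize}
\item through arrivals, each contributing at most \(S_i\wedge\ell_t\);
\item through a non-anchor job whose remaining size crosses below \(\ell_t\) while it is being served.
\end{itemize}
For the second case I use the accounting already stated in the paper: a job contributes to relevant work only during its last \(\ell_t\) units of service, so its total contribution is capped at \(\ell_t\).

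There is a subtle point here. A job present before \(u^\star\) with remaining size above \(\ell_t\) could later cross below \(\ell_t\). Such a job is not counted in \(\mathcal A(u^\star,s]\). I avoid this issue through the choice of \(u^\star\). Throughout \((u^\star,s]\) there are at least \(n+1\) relevant non-anchor jobs. Under SRPT, the \(n\) non-anchor service slots therefore all go to jobs with remaining size at most \(\ell_t\). So no non-relevant old job is served, and none can cross the threshold. Jobs that arrive in \((u^\star,s]\) with size above \(\ell_t\) are likewise not served until they become relevant. Their eventual relevant contribution is therefore at most \(\ell_t\), which is their capped size.

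During \((u^\star,s]\), the relevant jobs also fill the whole non-anchor allocation, since at least \(n\) of them are present. Relevant work therefore decreases at rate at least \(\mu_{\mathrm{na}}(v)\), by \eqref{eq:frac:mu-na}. This gives
\[
W^{\mathrm{rel}}(s)\le (n+1)\ell_t+\mathcal A(u^\star,s]-\int_{u^\star}^s\mu_{\mathrm{na}}(v)\,dv .
\]

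Finally, since \(D(s)=0\), Lemma~\ref{lem:frac:recovery}, specifically \eqref{eq:frac:backward-capacity}, gives
\[
\int_{u^\star}^s\mu_{\mathrm{na}}\ge c_t(s-u^\star).
\]
Hence
\[
W^{\mathrm{rel}}(s)\le(n+1)\ell_t+\mathcal A(u^\star,s]-c_t(s-u^\star)\le(n+1)\ell_t+U(s),
\]
by the definition \eqref{eq:frac:comparison-workload}.

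The main obstacle is the bookkeeping in the middle step, namely the boundary effects at \(u^\star\) and the fractional rate \(1-\gamma\) given to the \(n\)th job. If a relevant job leaves via anchor promotion, relevant work only drops, which is harmless. I also need to check that the \(n\)th-smallest job's rate \(1-\gamma\) still comes out of relevant work when at least \(n+1\) relevant jobs are present. I would verify this directly from Definition~\ref{def:frac:policy}. If \(n+1\) relevant jobs are needed to guarantee that an \(n\)th relevant job exists besides a possible anchor, I would adjust the definition of \(u^\star\) accordingly. This is exactly why I expect the constant to be \(n+1\).
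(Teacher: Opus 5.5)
Your argument is correct and essentially the paper's. You trace back to the start of the current excursion and bound the relevant work there by \((n+1)\ell_t\); you index the excursion by the number of relevant non-anchor jobs exceeding \(n\), whereas the paper uses \(W^{\mathrm{rel}}>n\ell_t\), which forces the same count. You then observe that no threshold crossings occur and that relevant jobs absorb all non-anchor capacity \(\ge\mu_{\mathrm{na}}\), and you finish with \eqref{eq:frac:backward-capacity} and the definition of \(U\).

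The worry in your last paragraph is moot. The anchor is excluded from the count by definition, so already \(n\) relevant non-anchor jobs guarantee that all \(n\) non-anchor service slots, including the one at rate \(1-\gamma\), go to relevant work.
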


That is, in a recovered state, the relevant work is at most the backlog of
the constant-rate comparison queue plus the work of \(n+1\) capped jobs.

\begin{proof}
    % # It is important to have first and second to guide the readers.
Relevant work can enter in two ways. First, an arrival increases it by at most
\(S_i\wedge \ell_t\); an arrival to an empty system becomes the anchor
and adds nothing.  Second, a non-anchor job can enter the relevant set when its
remaining size crosses \(\ell_t\). Such a crossing can occur only while
fewer than \(n\) relevant non-anchor jobs are present; otherwise all
available non-anchor capacity is assigned to jobs whose remaining sizes are
already at most \(\ell_t\).  Immediately after a crossing, therefore,
\[
    W^{\mathrm{rel}}\le n\ell_t.
\]

The desired bound is immediate when
\(W^{\mathrm{rel}}(s)\le(n+1)\ell_t\).  Otherwise, trace backward to the
start \(u\) of the current excursion above \(n\ell_t\).  An $\ell_t$ threshold down-crossing cannot start this excursion,
so it starts with an arrival and contains at most \((n+1)\ell_t\) work
immediately afterward.  Throughout the rest of the excursion, more than
\(n\) relevant non-anchor jobs are present.  They use all of the available
non-anchor capacity, which is at least \(\mu_{\mathrm{na}}(v)\)
by~\eqref{eq:frac:mu-na}, and hence
\[
    W^{\mathrm{rel}}(s)
    \le (n+1)\ell_t+\mathcal A(u,s]
              -\int_u^s\mu_{\mathrm{na}}(v)\,dv.
\]
Because \(D(s)=0\), Lemma~\ref{lem:frac:recovery} bounds the integral below
by \(c_t(s-u)\).  Taking the supremum over possible excursion starting times
gives~\eqref{eq:frac:work-comparison}.
\end{proof}

The second lemma converts the tagged job's own work and the work that can
run ahead of it into units of time.

\begin{lemma}[Tagged-job response-time bound]
\label{lem:frac:tagged-work-bound}
Suppose a tagged job of size \(x\le \ell_t\) arrives at time zero.  Then  
\begin{align}
&\left\{T_x^{\DynoSPLIT}>t,\
          \sup_{0\le s\le t}Z(s)\le \hatA_t\right\}
\notag\\
&\quad\subseteq
\left\{
    \frac{2}{\delta}x
    +\frac{W^{\mathrm{rel}}(0-)}{c_t}
    +\frac{\mathcal A(0,t]}{c_t}
    >t
\right\}.
\label{eq:frac:tagged-work-bound}
\end{align}
\end{lemma}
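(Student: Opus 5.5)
We argue by contraposition on the event \(\{\sup_{0\le s\le t}Z(s)\le\hatA_t\}\): assuming the tagged job is still present at time \(t\), we show that the displayed sum exceeds \(t\). The plan is a busy-period accounting over \([0,t]\). At each time \(s\in[0,t]\) while the tagged job is present, we classify \(s\) into one of two kinds. In the first kind, the tagged job itself receives service at rate at least \(\delta/2\). In the second kind, the tagged job receives less, and then the relevant non-anchor jobs ahead of it absorb the entire non-anchor capacity, which is at least \(\mu_{\mathrm{na}}(s)\ge r+\varepsilon(\hatA_t)\ge c_t\). The total time of the first kind is at most \(2x/\delta\), since the tagged job has only \(x\) units of work.

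To justify the classification, consider the possible roles of the tagged job. If at most \(n\) jobs are present, every job, including the tagged one, is served at rate one. If more than \(n\) jobs are present and the tagged job is the anchor, it is served at rate \(\gamma\ge\delta/2\) by \eqref{eq:frac:condition-margin-shape}. If the tagged job is a non-anchor job among the \(n-1\) smallest non-anchor jobs, it is served at rate one. If it is the \(n\)th smallest non-anchor job, its rate is \(1-\gamma\). Since \(\gamma=\delta-\varepsilon\le\delta\le1\), we have \(1-\gamma\ge 1-\delta\), but this need not be at least \(\delta/2\). I would therefore place this case in the second kind. In that case, \(n-1\) non-anchor jobs with remaining size at most the tagged job's remaining size (hence at most \(\ell_t\)) receive rate one, and the tagged job receives \(1-\gamma\), so the capacity \(n-\gamma\ge c_t\) is spent on work counted in \(W^{\mathrm{rel}}\) or on the tagged job. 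Finally, if the tagged job receives rate zero, at least \(n\) non-anchor jobs with remaining size at most the tagged job's remaining size occupy the whole non-anchor allocation \(n-\gamma(z)\). Here \(z\le Z(s)\le\hatA_t\), so this allocation is at least \(r+\varepsilon(\hatA_t)>c_t\).

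Next we bound the relevant work. Let \(V(s)\) be the work in non-anchor jobs other than the tagged job that have remaining size at most \(\ell_t\) and priority over the tagged job. This quantity starts at most \(W^{\mathrm{rel}}(0-)\). It increases only through arrivals, each adding at most \(S_i\wedge\ell_t\), so the total increase is at most \(\mathcal A(0,t]\). One subtlety is a job whose remaining size crosses below \(\ell_t\); such a job can overtake the tagged job only after its remaining size drops below the tagged job's remaining size \(\le x\le\ell_t\). I would handle this by defining \(V\) relative to the tagged job's current remaining size rather than to \(\ell_t\). A pre-existing job counted at \(0-\) is already in \(W^{\mathrm{rel}}(0-)\) if its remaining size is at most \(\ell_t\). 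A job with larger remaining size can overtake only after being served down to below the tagged job's remaining size, and such service happens only at second-kind times when the capacity is not fully used by work ahead. I expect this bookkeeping to be the main obstacle. The same issue arises for an anchor that completes and is replaced by a new anchor.

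During second-kind time, \(V\) decreases at rate at least \(c_t\) minus the tagged job's own share, and that share is already charged to the first-kind budget. Hence the length of the second-kind time is at most \((W^{\mathrm{rel}}(0-)+\mathcal A(0,t])/c_t\). Adding the two bounds gives \(t<2x/\delta+W^{\mathrm{rel}}(0-)/c_t+\mathcal A(0,t]/c_t\), which is \eqref{eq:frac:tagged-work-bound}. A cleaner route to the same split, if it works, is to charge the rate-\(1-\gamma\) case proportionally: count it as first-kind time in the fraction \((1-\gamma)/(\delta/2)\) of its length and as second-kind time in the remainder.
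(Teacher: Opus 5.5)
Your accounting is the same as the paper's in substance. The paper uses the potential \(G(s)=\frac{2}{\delta}R(s)+\frac{1}{c_t}L(s)\) and shows that it drifts down at rate at least one, which is your two-kind time split with the tagged job's service weighted by \(2/\delta\) and the work ahead of it weighted by \(1/c_t\). However, the step you flag as the main obstacle is a genuine gap, and the resolution you sketch points the wrong way.

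\textbf{The overtaking gap.} Your bound on second-kind time needs \(V\) to increase only at arrivals, which means no non-anchor job may ever overtake the tagged job in the SRPT order. You instead allow a job behind the tagged job to be ``served down'' below \(R(s)\) at second-kind times. But at second-kind times the tagged job sits at position \(n\) or later in the non-anchor order, so every job behind it receives rate zero.

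The missing idea is monotonicity. Under Definition~\ref{def:frac:policy}, non-anchor rates are nonincreasing along the SRPT order (\(1,\dots,1,\,1-\gamma,\,0,\dots\)). So if \(R_j(s_0)>R(s_0)\) and both jobs remain non-anchor on \([s_0,s]\), then \(R_j-R\) is nondecreasing while positive, and \(j\) is still behind at time \(s\). Consequences:
\begin{itemize}
\item Every job present at \(0-\) with \(R_j(0-)>\ell_t\ge x\), and every arrival with \(S_j>\ell_t\ge R(\tau_j)\), stays behind the tagged job for as long as both are non-anchor.
\item The anchor at \(0-\) never competes, because the anchor role is permanent.
\item A job that leaves the non-anchor side to become the anchor only removes work from ahead.
\end{itemize}
The paper packages this by tracking the work \(L(s)\) of a \emph{fixed} set of jobs: those counted in \(W^{\mathrm{rel}}(0-)\), plus arrivals in \((0,t]\) of size at most \(\ell_t\). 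This total can rise only at arrivals, and the paper then proves that every job ahead of the tagged job belongs to the set. Without such an argument, each overtaking could inject up to \(\ell_t\) of work ahead of the tagged job, with no bound on the number of such events, and your estimate of second-kind time fails.

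\textbf{A smaller point: the \(n\)th-position case.} Since \(c_t=n-\delta+m_t>n-1\), the \(n-1\) jobs ahead do not by themselves drain work at rate \(c_t\). So the phrase ``that share is already charged to the first-kind budget'' must be made precise. Let \(\mathcal T_1,\mathcal T_2\) be the sets of first- and second-kind times, \(T_1,T_2\) their lengths, and \(Q:=\int_{\mathcal T_2}q(s)\,ds\) the tagged job's service during second-kind time. Then
\[
    \tfrac{\delta}{2}T_1+Q\le x,
    \qquad
    c_tT_2\le W^{\mathrm{rel}}(0-)+\mathcal A(0,t]+Q.
\]
Adding \(\frac{2}{\delta}\) times the first inequality to \(\frac{1}{c_t}\) times the second gives the claim only because \(\frac{2}{\delta}\ge\frac{1}{c_t}\). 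This inequality holds by the high-load assumption (\(c_t>r\ge n-1\ge1\ge\delta\)), and it is exactly the one the paper uses in its final drift estimate. Your proportional-charging variant also works, for the same reason, but you should state the inequality explicitly.
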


% That is, as long as the anchor stays below \(\hatA_t\), the tagged job can
% still be present at time \(t\) only if its own work, the relevant work it
% found on arrival, and the capped work that arrived afterwards together take
% more than \(t\) time units to serve at the guaranteed rates.

\begin{proof}
Let \(R(s)\) be the tagged job's remaining size at time \(s\).  Roughly
speaking, the non-anchor work that can run ahead of the tagged job must be
completed at rate at least \(c_t\); the only exception is when the tagged
job itself is being served, which happens at rate at least \(\delta/2\).
It remains to identify the work that can run ahead of the tagged job.
Under SRPT, a non-anchor job that is behind the tagged job never overtakes
it, so this work consists of two groups of jobs: the non-anchor jobs
present at time \(0-\) with remaining size at most \(\ell_t\), which are
exactly the jobs counted in \(W^{\mathrm{rel}}(0-)\), and the jobs arriving
after time zero with size at most \(\ell_t\).

To see this more precisely, let \(L(s)\) be the total remaining work at
time \(s\) of the jobs in the above two groups, and define the potential
\begin{equation}
    G(s):=\frac{2}{\delta}R(s)+\frac{1}{c_t}L(s).
    \label{eq:frac:tagged-potential}
\end{equation}
We verify three properties of \(G\).

\emph{Initial value.}
At time zero, \(L(0)=W^{\mathrm{rel}}(0-)\), so
\begin{equation}
    G(0)=\frac{2}{\delta}x+\frac{W^{\mathrm{rel}}(0-)}{c_t}.
    \label{eq:frac:credit-initial}
\end{equation}

\emph{Jumps.}
\(L\) increases only when a job of size \(S_i\le\ell_t\) arrives, and then
by exactly \(S_i\). Hence the total
increase of \(G\) over \((0,t]\) is at most
\[
    \frac{1}{c_t}\sum_{i:\,0<\tau_i\le t,\ S_i\le\ell_t}S_i
    \le\frac{\mathcal A(0,t]}{c_t}.
\]

\emph{Drift.}
Between arrivals, \(G\) decreases at rate at least one while the tagged
job is present.  Let \(q(s)\) be the tagged job's service rate.  There are
three cases.  First, if the tagged job is the anchor, then it is served at
rate one or at rate \(\gamma(x)\ge\delta/2\)
by~\eqref{eq:frac:condition-margin-shape}, so the first term of \(G\)
alone decreases at rate at least one.  Second, if the tagged job is a
non-anchor job with \(q(s)=1\), the same holds.  Third, suppose the tagged
job is a non-anchor job with \(q(s)<1\).  Then more than \(n\) jobs are
present and the tagged job is at least the \(n\)th in the non-anchor SRPT
order, so the jobs ahead of it receive all the non-anchor capacity that
the tagged job does not.  Writing \(p(s)\) for their total service rate,
\[
    p(s)+q(s)=n-\gamma\bigl(Z(s)\bigr)\ge\mu_{\mathrm{na}}(s)
\]
by~\eqref{eq:frac:mu-na}.  On the event on the left side
of~\eqref{eq:frac:tagged-work-bound}, \(Z(s)\le\hatA_t\), so, by the
monotonicity of \(\varepsilon\),
\[
    \mu_{\mathrm{na}}(s)
    \ge r+\varepsilon(\hatA_t)
    =c_t+m_t
    >c_t.
\]
We claim that every job ahead of the tagged job belongs to one of the two
groups counted in \(L\), so that \(L\) decreases at rate at least \(p(s)\) between arrivals.  Indeed, let
\(j\) be a non-anchor job at time \(s\) that belongs to neither group.
Since the anchor role is permanent, \(j\) was not the anchor at time
\(0-\); hence either \(j\) was present at time \(0-\) with
\(R_j(0-)>\ell_t\), or \(j\) arrived at some time \(\tau_j\in(0,s]\) with
\(S_j>\ell_t\).  In both cases, at the corresponding time
\(s_0\in\{0,\tau_j\}\) we have \(R_j(s_0)>\ell_t\ge x\ge R(s_0)\), so
\(j\) was behind the tagged job.  Both jobs are non-anchor jobs throughout
\([s_0,s]\), and the non-anchor service rates are nonincreasing along the
SRPT order, so \(R_j-R\) is nondecreasing as long as it is positive.
Hence \(R_j(s)>R(s)\), and \(j\) is still behind the tagged job at time
\(s\), which proves the claim.

Finally, the high-load assumption of Theorem~\ref{thm:frac:optimality}
gives \(r\ge n-1\ge1\) and \(\delta\le1\), so \(c_t>r\ge1>\delta/2\), that
is, \(2/\delta>1/c_t\): A unit of the tagged job's own service reduces
\(G\) at least as much as a unit of service to the jobs ahead of it.  Hence
we have that
\[
    -\frac{d}{ds}G(s)
    \ge \frac{2}{\delta}q(s)+\frac{1}{c_t}p(s)
    \ge \frac{p(s)+q(s)}{c_t}
    \ge \frac{\mu_{\mathrm{na}}(s)}{c_t}
    \ge1.
\]

If the tagged job is still present at time \(t\), then \(G(t)>0\).
Integrating the drift and adding the jumps, using
\eqref{eq:frac:credit-initial}, gives
\[
    0<G(t)
    \le
    \frac{2}{\delta}x
    +\frac{W^{\mathrm{rel}}(0-)}{c_t}
        +\frac{\mathcal A(0,t]}{c_t}-t,
\]
which proves the event inclusion.
\end{proof}

Combining the two lemmas on \(\mathcal G_t\) gives the following pathwise
reduction, which is the only output of this step that the concentration
argument needs.

\begin{lemma}[Small-job pathwise reduction]
\label{lem:frac:small-pathwise}
Let
\[
    K:=\frac{2(r+1)}{\delta}+n+1.
\]
Under condition~\eqref{eq:frac:condition-cap}, for all sufficiently large
\(t\),
\begin{align}
&\left\{T_{x}^{\DynoSPLIT}>t,\ x\le\ell_t\right\}\cap\mathcal G_t
\notag\\
&\quad\subseteq
\left\{
    K\ell_t+U(0-)
    +\mathcal A(0,t]
    >c_t t
\right\}.
\label{eq:frac:small-pathwise}
\end{align}
\end{lemma}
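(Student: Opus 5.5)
The plan is to combine Lemma~\ref{lem:frac:tagged-work-bound} with Lemma~\ref{lem:frac:work-comparison}, using \(\mathcal G_t\) to supply the hypotheses of both.

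\emph{Step 1: the anchor stays below \(\hatA_t\) on \([0,t]\).} On \(\mathcal G_t\), every job present at time \(0-\) has original size at most \(\hatA_t\), because \(D(0-)=0\) forces \(B(0-)=0\). Indeed, if \(B(0-)=1\), then by right-continuity bad time would occupy a left neighbourhood, and the integral in \eqref{eq:frac:recovery-debt} would be positive. Every job arriving in \((0,t]\) also has size at most \(\hatA_t\) on \(\mathcal G_t\). The tagged job satisfies \(x\le\ell_t\le\hatA_t\) for large \(t\) by~\eqref{eq:frac:condition-cap}. Hence \(\sup_{0\le s\le t}Z(s)\le\hatA_t\), and Lemma~\ref{lem:frac:tagged-work-bound} applies. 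This gives
\[
\frac{2}{\delta}x+\frac{W^{\mathrm{rel}}(0-)}{c_t}+\frac{\mathcal A(0,t]}{c_t}>t .
\]

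\emph{Step 2: bound the relevant work at arrival.} Because \(D(0-)=0\), Lemma~\ref{lem:frac:work-comparison} at \(s=0-\) gives \(W^{\mathrm{rel}}(0-)\le(n+1)\ell_t+U(0-)\). The tagged arrival is not included in either quantity.

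\emph{Step 3: collect terms.} Multiply the inequality from Step~1 by \(c_t\) and substitute the bound from Step~2:
\[
\frac{2c_t}{\delta}x+(n+1)\ell_t+U(0-)+\mathcal A(0,t]>c_tt .
\]
Next use \(x\le\ell_t\) together with \(c_t=r+m_t\le r+\delta/4\le r+1\). This follows from \(\varepsilon\le\delta/2\) and \(\delta\le1\). Therefore
\[
\frac{2c_t}{\delta}x\le\frac{2(r+1)}{\delta}\ell_t ,
\]
and the left side is at most \(K\ell_t+U(0-)+\mathcal A(0,t]\). This is \eqref{eq:frac:small-pathwise}.

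The only delicate point is Step~1. One must check carefully that \(D(0-)=0\) rules out a red job present at \(0-\). This could fail in the degenerate case where a red job is present only at the single instant \(0-\), but such an event has probability zero under stationarity and PASTA. Alternatively, one may read \(\mathcal G_t\) as also excluding \(Z(0-)>\hatA_t\); this costs only the extra term \(4r^{>\hatA_t}/\delta\), already covered by Corollary~\ref{cor:frac:good-horizon}. The remaining steps are direct substitutions.
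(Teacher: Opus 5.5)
Your proposal is correct and follows the paper's proof essentially step for step: on \(\mathcal G_t\) you use \(D(0-)=0\) to apply Lemma~\ref{lem:frac:work-comparison} at the tagged arrival and to rule out red jobs, so that Lemma~\ref{lem:frac:tagged-work-bound} applies; you then multiply by \(c_t\) and absorb \(2c_t x/\delta\) using \(x\le\ell_t\) and \(c_t\le r+1\). Your justification that \(D(0-)=0\) excludes a red job at \(0-\) is sound, though it rests on \(B\) being piecewise constant with left limits (so \(B(0-)=1\) forces \(B=1\) on a left neighbourhood of \(0\)) rather than on right-continuity, and this also makes your hedge about a degenerate single-instant case unnecessary.
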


% That is, on a good horizon a small tagged job is late only if the
% constant-rate comparison queue, counting both its backlog at time zero and
% its capped input over \((0,t]\), holds more work than the benchmark rate can
% serve during the horizon, up to the deterministic term \(K\ell_t\).

\begin{proof}
On \(\mathcal G_t\), the condition \(D(0-)=0\) lets us apply
Lemma~\ref{lem:frac:work-comparison} at the tagged arrival:
\[
    W^{\mathrm{rel}}(0-)
    \le (n+1)\ell_t+U(0-).
\]
On \(\mathcal G_t\), no red job is present at time \(0-\), since the system is recovered then, and no red job arrives during \((0,t]\).  The tagged job is not red either, because \(x\le\ell_t\le\hatA_t\) by~\eqref{eq:frac:condition-cap}.  Hence the anchor never exceeds \(\hatA_t\) during
\([0,t]\), so
Lemma~\ref{lem:frac:tagged-work-bound} applies with \(x\le\ell_t\).
We multiply its work bound by \(c_t\) and substitute the comparison above.  The coefficient of \(\ell_t\) is then at most
\[
        \frac{2c_t}{\delta}+n+1.
\]
Because \(c_t\le r+1\), this coefficient is at most \(K\), proving
\eqref{eq:frac:small-pathwise}.
\end{proof}

\subsubsection{Step 3: Show that capped work does not consume the margin with very high probability}
\label{sec:frac:concentration}

We now carry out the last step of the roadmap:
% the capped work is unlikely
% to exceed the extra \(\varepsilon(\hatA_t)t=2m_tt\) units of service that
We use concentration bounds to prove that the event on the right-hand side of \eqref{eq:frac:small-pathwise} is rare.
% the margin provides over the horizon.
Subtracting the mean capped input
from the event on the right side of \eqref{eq:frac:small-pathwise}
identifies the two random deviations that remain:
\begin{align*}
&\left\{
    K\ell_t+U(0-)
    +\mathcal A(0,t]>c_t t
\right\}\\
&\quad=
\left\{
    U(0-)
    +\bigl(\mathcal A(0,t]-r_{\ell_t}t\bigr)
    >(c_t-r_{\ell_t})t-K\ell_t
\right\}.
\end{align*}
The first random term is backlog present at time zero; the second is the
amount by which future capped input exceeds its mean.  Since
\(r_{\ell_t}\le r\), we have \(c_t-r_{\ell_t}\ge m_t\), so the gap on the
right is at least \(m_tt-K\ell_t\).

We should imagine \(\ell_t\) as growing slower than $t$, so that \(K\ell_t\le m_tt/2\) for all large \(t\); the precise
requirement is stated before Proposition~\ref{prop:frac:small}.  The gap
on the right is then at least \(m_tt/2\), and, if the two deviations
together exceed the gap, one of them exceeds \(m_tt/4\).  Hence the work
event is contained in
\begin{align}
    &\left\{U(0-)>\frac{m_t t}{4}\right\}
    \notag\\[-2pt]
    &\qquad\cup
    \left\{
       \mathcal A(0,t]-r_{\ell_t}t
       >\frac{m_t t}{4}
    \right\}.
        \label{eq:frac:two-deviations}
\end{align}

The next lemma controls both deviations while retaining the vanishing margin
in the exponent.
% The fact that the centered input has mean zero is not
% sufficient for this: exponentiating it introduces a positive contribution
% from its random jumps, of the order of the second moment of a capped job
% size.  For that contribution to be negligible relative to the available
% margin, we require
% \begin{equation}
%     \frac{\E{(S\wedge\ell_t)^2}}{\ell_t}
%     =o(m_t)=o\bigl(\varepsilon(\hatA_t)\bigr).
%     \label{eq:frac:condition-small-moment}
% \end{equation}

\begin{lemma}[Capped work rarely exceeds the margin]
\label{lem:frac:concentration}
Assume the following condition:
\begin{equation}
    \frac{\E{(S\wedge\ell_t)^2}}{\ell_t}
    =o(m_t)=o\bigl(\varepsilon(\hatA_t)\bigr).
    \label{eq:frac:condition-small-moment}
\end{equation}
For all sufficiently
large \(t\),
\begin{align}
    \P{U(0-)>m_t t/4}
    &\le \exp\left\{-\frac{m_t t}{8\ell_t}\right\},
    \label{eq:frac:backlog-concentration}\\
    \P{\mathcal A(0,t]-r_{\ell_t}t>m_t t/4}
    &\le \exp\left\{-\frac{m_t t}{8\ell_t}\right\}.
    \label{eq:frac:arrival-concentration}
\end{align}
\end{lemma}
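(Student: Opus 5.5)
Both bounds will come from the exponential Chernoff method applied to the capped compound-Poisson input, with exponent parameter $\theta:=1/\ell_t$. Each capped jump lies in $[0,\ell_t]$, so the log-moment generating function is explicit:
\[
    \log\E{e^{\theta(\mathcal A(0,t]-r_{\ell_t}t)}}
    =\lambda t\,\E{e^{\theta(S\wedge\ell_t)}-1-\theta(S\wedge\ell_t)}.
\]
The elementary inequality $e^y-1-y\le y^2$ holds for $0\le y\le1$. Applied with $y=\theta(S\wedge\ell_t)\le 1$, it bounds the right side by
\[
    \lambda t\,\frac{\E{(S\wedge\ell_t)^2}}{\ell_t^2}
    =\frac{t}{\ell_t}\cdot\lambda\frac{\E{(S\wedge\ell_t)^2}}{\ell_t}
    =\frac{t}{\ell_t}\,o(m_t),
\]
where the last step is condition~\eqref{eq:frac:condition-small-moment}. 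For large $t$ this is at most $\frac{m_t t}{8\ell_t}$.

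\textbf{Arrival deviation.} Markov's inequality at $\theta=1/\ell_t$ gives
\[
    \P{\mathcal A(0,t]-r_{\ell_t}t>m_tt/4}
    \le \exp\Bigl\{-\frac{m_tt}{4\ell_t}+\frac{m_tt}{8\ell_t}\Bigr\}
    =\exp\Bigl\{-\frac{m_tt}{8\ell_t}\Bigr\},
\]
which is~\eqref{eq:frac:arrival-concentration}.

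\textbf{Backlog deviation.} For~\eqref{eq:frac:backlog-concentration}, $U(0-)$ is the stationary workload of a queue with capped compound-Poisson input and service rate $c_t$. By time reversal, $U(0-)$ is distributed as $\sup_{s\ge0}X(s)$, where $X(s):=\mathcal A(0,s]-c_ts$ is a Lévy process with drift $-(c_t-r_{\ell_t})\le -m_t$. I will use the martingale $e^{\theta X(s)-s\kappa(\theta)}$, where
\[
    \kappa(\theta)=\lambda\E{e^{\theta(S\wedge\ell_t)}-1}-c_t\theta
\]
is the Laplace exponent. At $\theta=1/\ell_t$, the same moment bound gives
\[
    \kappa(\theta)\le \theta\bigl(r_{\ell_t}-c_t\bigr)+\theta\,o(m_t)\le -\theta m_t+\theta\,o(m_t)<0
\]
for large $t$. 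Since $\kappa(\theta)\le0$, the process $e^{\theta X(s)}$ is a nonnegative supermartingale. Doob's maximal inequality (equivalently, the Cramér--Lundberg argument with optional stopping at the first passage above level $b$) then gives $\P{\sup_s X(s)>b}\le e^{-\theta b}$. Taking $b=m_tt/4$ yields $\exp\{-m_tt/(4\ell_t)\}$, which is even stronger than the required $\exp\{-m_tt/(8\ell_t)\}$.

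The main obstacle is to keep the vanishing margin in the exponent. This requires choosing $\theta=1/\ell_t$ exactly, so that the second-order term of the moment generating function is controlled by~\eqref{eq:frac:condition-small-moment} uniformly in $t$. It also requires confirming that the Loynes representation of $U(0-)$ matches the supremum of the reversed Lévy process; both hold because the input is stationary compound Poisson.
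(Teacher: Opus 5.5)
Your proposal is correct and follows essentially the same route as the paper: a Chernoff bound at $\theta=1/\ell_t$ with a second-order bound on $e^{y}$ for $0\le y\le 1$ (you use $e^{y}-1-y\le y^{2}$ where the paper uses the constant $e/2$), Markov's inequality for the arrival deviation, and time reversal plus Doob's maximal inequality for an exponential supermartingale, which gives the stronger backlog bound $\exp\{-m_t t/(4\ell_t)\}$. The only difference is cosmetic: you obtain the supermartingale property from the L\'evy exponent $\kappa(\theta)\le 0$ of $\mathcal A(0,s]-c_t s$, whereas the paper centers the input at $r_{\ell_t}$ and then uses $c_t-r_{\ell_t}\ge m_t$ separately.
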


% That is, each deviation has probability exponentially small in the ratio of
% the margin buffer \(m_tt\) to the cap \(\ell_t\).  The cap appears because
% the largest possible jump of the capped input is \(\ell_t\), and the bound
% must hold uniformly over jumps of that size.

\begin{proof}
Let \(Y=S\wedge\ell_t\).  The capped input \(\mathcal A(0,s]\) is a
compound Poisson process with rate \(\lambda\) and jumps distributed as
\(Y\), so for every \(\theta>0\),
\[
    \E{e^{\theta\mathcal A(0,s]}}
    =\exp\bigl\{\lambda s\bigl(\E{e^{\theta Y}}-1\bigr)\bigr\}.
\]
Its mean is \(\E{\mathcal A(0,s]}=\lambda\E{Y}s=r_{\ell_t}s\).  Let
\[
    X(s):=\mathcal A(0,s]-r_{\ell_t}s,
    \qquad s\ge0,
\]
be the capped input with its mean removed.  Then \(X(0)=0\), and
\begin{equation}
    \log\E{e^{\theta X(s)}}
    =\lambda s\bigl(\E{e^{\theta Y}}-1\bigr)-\theta r_{\ell_t}s.
    \label{eq:frac:capped-mgf}
\end{equation}
Set \(\theta=1/\ell_t\).  Since \(0\le\theta Y\le1\), the elementary inequality
\(e^y\le1+y+(e/2)y^2\) gives 
\[
    \lambda s\bigl(\E{e^{\theta Y}}-1\bigr)
    \le\theta r_{\ell_t}s+\frac{e}{2}\lambda s\theta^2\E{Y^2},
\]%
and hence
\[
    \log\E{e^{\theta X(s)}}
        \le \frac{e\lambda s}{2}\theta^2\E{Y^2}.
\]
Condition~\eqref{eq:frac:condition-small-moment} says
\(\E{Y^2}/\ell_t=o(\varepsilon(\hatA_t))=o(m_t)\).  Thus, since $\frac{e \lambda}{2}$ is a constant independent of $t$, for all large \(t\), we have
\begin{equation}
    \E{e^{\theta X(s)}}
    \le \exp\left\{ \frac{e\lambda s}{2}\theta^2\E{Y^2} \right\} = \exp\left\{ \frac{e\lambda}{2} s\theta \frac{\E{Y^2}}{\ell_t} \right\} \le \exp\left\{\frac{m_t\theta s}{8}\right\}
    \qquad\text{for every \(s\ge0\)}.
        \label{eq:frac:uniform-mgf}
\end{equation}

Markov's inequality and~\eqref{eq:frac:uniform-mgf} yield
\[
    \P{X(t)>m_t t/4}
    \le \exp\{-\theta m_t t/4\}\E{e^{\theta X(t)}}
    \le \exp\left\{-\frac{m_t t}{8\ell_t}\right\},
\]
which proves~\eqref{eq:frac:arrival-concentration}.

For the stationary initial backlog, recall from
\eqref{eq:frac:comparison-workload} that
\(U(0-)=\sup_{u\le0}\{\mathcal A(u,0]-c_t(0-u)\}\).  The marked Poisson
process is stationary and reversible in time, so the capped input over
\((u,0]\) has the same law as that over \((0,-u]\).  Hence we have that
\begin{equation}
    U(0-)
    \stackrel{\mathrm d}{=}
    \sup_{s\ge0}\{\mathcal A(0,s]-c_ts\}
    =\sup_{s\ge0}\{X(s)-(c_t-r_{\ell_t})s\},
    \label{eq:frac:backlog-supremum}
\end{equation}
where the last equality uses the definition of \(X\).
Define
\[
    M(s):=\exp\{\theta[X(s)-(c_t-r_{\ell_t})s]\}.
\]
For \(0\le u\le v\), write
\(M(v)=M(u)\exp\{\theta[X(v)-X(u)]-\theta(c_t-r_{\ell_t})(v-u)\}\).
Conditionally on the path of \(X\) up to time \(u\), the factor \(M(u)\)
is fixed, and the increment \(X(v)-X(u)\) is independent of that path
and distributed as \(X(v-u)\).  Hence we have that
\begin{align*}
    \E{M(v)\mid X(w),\,w\le u}
    &=M(u)\,\E{e^{\theta X(v-u)}}\,e^{-\theta(c_t-r_{\ell_t})(v-u)}\\
    &\le M(u)\,e^{\theta m_t(v-u)/8}\,e^{-\theta(c_t-r_{\ell_t})(v-u)}
    &&\text{by~\eqref{eq:frac:uniform-mgf}}\\
    &\le M(u)
        &&\text{by \(c_t-r_{\ell_t}\ge m_t\)}.
\end{align*}
Hence \(M\) is a nonnegative supermartingale with \(M(0)=1\).  Doob's
maximal inequality states that a nonnegative supermartingale exceeds a
level \(a>0\) at some time in \([0,T]\) with probability at most
\(\E{M(0)}/a\) \citep{durrett2019probability}.  Fix \(z>0\) and
\(T>0\).  Since \(e^{\theta y}\) is increasing in \(y\), the process
\(X(s)-(c_t-r_{\ell_t})s\) exceeds \(z\) at some time in \([0,T]\)
exactly when \(M\) exceeds \(e^{\theta z}\) there.  Hence we have that
\begin{align*}
    \P{\sup_{0\le s\le T}\{X(s)-(c_t-r_{\ell_t})s\}>z}
    &=\P{\sup_{0\le s\le T}M(s)>e^{\theta z}}\\
    &\le \E{M(0)}\,e^{-\theta z}
    &&\text{by Doob's maximal inequality}\\
    &=e^{-\theta z}
        &&\text{by \(M(0)=1\)}.
\end{align*}
As \(T\to\infty\), the events on the left increase to
\(\{\sup_{s\ge0}\{X(s)-(c_t-r_{\ell_t})s\}>z\}\), so the same bound
holds for that event.  By \eqref{eq:frac:backlog-supremum}, for every
\(z>0\),
\[
    \P{U(0-)>z}
    \le e^{-\theta z}.
\]
Taking \(z=m_t t/4\) gives the stronger exponent \(-m_t t/(4\ell_t)\), and therefore
also proves~\eqref{eq:frac:backlog-concentration}.
\end{proof}

Lemma~\ref{lem:frac:concentration} bounds each deviation by
\(\exp\{-m_tt/(8\ell_t)\}\), whereas the target \(\Fbar(\delta t)\) decays
only polynomially in \(t\).  For the exponential to be
\(o(\Fbar(\delta t))\), the exponent must grow faster than every multiple
of \(\log t\).  Since \(m_t=\varepsilon(\hatA_t)/2\), we require
\begin{equation}
    \frac{\varepsilon(\hatA_t)t}
         {\ell_t\log t}
    \longrightarrow\infty.
    \label{eq:frac:condition-small-exponent}
\end{equation}
This is the requirement, announced in the roadmap, that pushes \(\ell_t\)
downward relative to the available margin.  It also implies
\(\ell_t=o(m_tt)\), as assumed in~\eqref{eq:frac:two-deviations}.   We can
now complete the first of the three bounds in the roadmap.

\begin{proposition}[Small-job contribution]
\label{prop:frac:small}
If
\eqref{eq:frac:condition-cap}, \eqref{eq:frac:condition-anchor},
\eqref{eq:frac:condition-small-exponent}, and
\eqref{eq:frac:condition-small-moment} hold, then
\begin{equation}
    \P{T_{x}^{\DynoSPLIT}>t,\ x\le\ell_t}
    =o\bigl(\Fbar(\delta t)\bigr).
    \label{eq:frac:small-contribution}
\end{equation}
\end{proposition}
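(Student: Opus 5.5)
The plan is to assemble the three steps of Section~\ref{sec:frac:small} via a union bound. We split the event according to whether the good-horizon event \(\mathcal G_t\) of~\eqref{eq:frac:good-horizon} occurs:
\[
    \P{T_{x}^{\DynoSPLIT}>t,\ x\le\ell_t}
    \le \P{\mathcal G_t^c}
    +\P{\{T_{x}^{\DynoSPLIT}>t,\ x\le\ell_t\}\cap\mathcal G_t}.
\]
The first term is handled by Corollary~\ref{cor:frac:good-horizon}, which applies because~\eqref{eq:frac:condition-cap} holds. Together with condition~\eqref{eq:frac:condition-anchor}, the corollary gives~\eqref{eq:frac:good-horizon-probability}, so \(\P{\mathcal G_t^c}=o(\Fbar(\delta t))\).

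For the second term, we apply Lemma~\ref{lem:frac:small-pathwise} for all large \(t\). This bounds the term by \(\P{K\ell_t+U(0-)+\mathcal A(0,t]>c_tt}\). As computed before~\eqref{eq:frac:two-deviations}, subtracting \(r_{\ell_t}t\) and using \(c_t-r_{\ell_t}\ge m_t\) leaves a gap of at least \(m_tt-K\ell_t\). Condition~\eqref{eq:frac:condition-small-exponent} gives \(\ell_t/(m_tt)\to0\): indeed \(m_t t/\ell_t=\varepsilon(\hatA_t)t/(2\ell_t)\ge \log t\) eventually. Hence \(K\ell_t\le m_tt/2\) for large \(t\), and the event is contained in the union~\eqref{eq:frac:two-deviations}. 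Lemma~\ref{lem:frac:concentration}, valid under~\eqref{eq:frac:condition-small-moment}, then bounds the second term by \(2\exp\{-m_tt/(8\ell_t)\}\).

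It remains to show that this exponential is \(o(\Fbar(\delta t))\), which is where regular variation enters. By Potter's bounds (or directly from~\eqref{eq:regular varying definition}), for any \(\eta>0\) we have \(\Fbar(\delta t)\ge t^{-\alpha-\eta}\) for large \(t\). Write \(\kappa_t:=m_tt/(8\ell_t\log t)=\varepsilon(\hatA_t)t/(16\ell_t\log t)\). By~\eqref{eq:frac:condition-small-exponent}, \(\kappa_t\to\infty\). The exponential equals \(t^{-\kappa_t}\), so
\[
    \frac{2\exp\{-m_tt/(8\ell_t)\}}{\Fbar(\delta t)}\le 2t^{\alpha+\eta-\kappa_t}\to0 .
\]
Summing the two terms proves~\eqref{eq:frac:small-contribution}.

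The only mildly delicate point is the lower bound on \(\Fbar(\delta t)\) by a polynomial. I would justify it by the standard representation of a regularly varying tail as \(x^{-\alpha}L(x)\) with \(L\) slowly varying, which satisfies \(L(x)\ge x^{-\eta}\) eventually. Everything else is bookkeeping of the earlier lemmas.
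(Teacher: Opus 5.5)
Your proposal is correct and follows essentially the same route as the paper. You split on $\mathcal G_t$, control $\P{\mathcal G_t^c}$ by Corollary~\ref{cor:frac:good-horizon} with~\eqref{eq:frac:condition-anchor}, and pass through Lemma~\ref{lem:frac:small-pathwise}, the inclusion~\eqref{eq:frac:two-deviations}, and Lemma~\ref{lem:frac:concentration}. You then beat the polynomial lower bound on $\Fbar(\delta t)$ with the exponential; your explicit derivation of $K\ell_t\le m_tt/2$ from~\eqref{eq:frac:condition-small-exponent} and the rewriting $\exp\{-m_tt/(8\ell_t)\}=t^{-\kappa_t}$ simply spell out steps the paper states in words.
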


\begin{proof}
Corollary~\ref{cor:frac:good-horizon},
Lemma~\ref{lem:frac:small-pathwise},
inclusion~\eqref{eq:frac:two-deviations}, and
Lemma~\ref{lem:frac:concentration} give
\[
    \P{T_{x}^{\DynoSPLIT}>t,\ x\le\ell_t}
    \le o\bigl(\Fbar(\delta t)\bigr)
              +2\exp\left\{-\frac{m_t t}{8\ell_t}\right\}.
\]
Since \(m_t=\varepsilon(\hatA_t)/2\), condition
\eqref{eq:frac:condition-small-exponent} makes the exponential term decay
faster than every power of \(t\).  A regularly varying tail decays more
slowly than \(t^{-p}\) for some fixed \(p>\alpha\).  The exponential term is
therefore \(o(\Fbar(\delta t))\), proving \eqref{eq:frac:small-contribution}.
\end{proof}

This proves the first bound in the roadmap.  Its main restriction is that
\(\ell_t\) must be small enough for the available margin
\(\varepsilon(\hatA_t)\) to absorb all capped work.  The large-job
argument creates the opposite pressure: \(\ell_t\) must be large enough to
make a collision with another large job negligible.

\subsection{Large jobs: collisions are rare}
\label{sec:frac:large}

We now prove the large-job bound~\eqref{eq:frac:roadmap-large}.  Recall the
idea from the roadmap: A large tagged job first waits for the current
anchor to complete, then becomes the anchor itself and keeps that role
until it finishes.  It can be late only if the wait is long, which requires
a job larger than \(\sqrt x\) to be present on arrival,\endnote{Here \(\sqrt x\) is not crucial and can be any
other function in \(o(x)\) as \(x\to\infty\).} or if a larger job
arrives during the wait and takes the anchor role first.
Lemma~\ref{lem:frac:collision} makes this precise for a fixed tagged size
and bounds both collision events using
Lemma~\ref{lem:frac:anchor-occupancy}.  Proposition~\ref{prop:frac:large}
averages the bound over the tagged size.  This is where the requirement
that pushes \(\ell_t\) upward appears.

Fix a tagged size \(x\in(\ell_t,h_t]\).  We first record how the tagged
job proceeds when nothing interferes with it.  On arrival, it joins the
non-anchor side.  If nothing interferes, this means every job present has original size at most \(\sqrt x\), and the current anchor has remaining size at most \(\sqrt x\) and is served at rate at least \(\delta/2\); the anchor then completes within
\(2\sqrt x/\delta\) time units.  If moreover no job larger than \(x\)
arrives before then, the tagged job is the largest job present when the
anchor completes, so it becomes the anchor and is served at rate at least
\(\gamma(x)\ge\gamma(\ell_t)\) until it finishes, by monotonicity of
\(\varepsilon\).  Its response time is then at most
\(2\sqrt x/\delta+x/\gamma(\ell_t) \leq 2\sqrt t/\delta+x/\gamma(\ell_t)\).  Every job in the large range
finishes within time \(t\) along this path provided
\begin{equation}
    \ell_t<h_t\le\gamma(\ell_t)\Bigl(t-\frac{2\sqrt t}{\delta}\Bigr)
    \quad\text{for all sufficiently large \(t\)}.
        \label{eq:frac:condition-completion}
\end{equation}
The term \(2\sqrt t/\delta\) is the \(o(t)\) slack of the roadmap, reserved for the wait for the current anchor.

Besides \eqref{eq:frac:condition-completion}, the other condition we require is
\begin{equation}
    \ell_t\longrightarrow\infty.
    \label{eq:frac:condition-large-threshold}
\end{equation}
With these two requirements in place, the two ways in which the path above
can fail give the following bound.

\begin{lemma}[Conditional collision bound]
\label{lem:frac:collision}
Suppose \eqref{eq:frac:condition-completion} and
\eqref{eq:frac:condition-large-threshold} hold.
For all sufficiently large \(t\) and every
\(\ell_t<x\le h_t\),
\begin{equation}
    \P{T_x^{\DynoSPLIT}>t}
    \le \frac{4r^{>\sqrt x}}{\delta}
       +\frac{2\lambda\sqrt x}{\delta}\Fbar(x).
    \label{eq:frac:collision-bound}
\end{equation}
\end{lemma}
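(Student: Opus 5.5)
The plan is to show that the tagged job is late only if one of two collision events occurs, and to bound each by an earlier result. Fix \(x\in(\ell_t,h_t]\) and let
\[
E_1:=\{Z(0-)>\sqrt x\},
\qquad
E_2:=\{\text{a job larger than }x\text{ arrives during }(0,2\sqrt x/\delta]\}.
\]
The first step is the pathwise claim that on \(E_1^c\cap E_2^c\) we have \(T_x^{\DynoSPLIT}\le t\). The bound~\eqref{eq:frac:collision-bound} then follows from a union bound. By PASTA and Lemma~\ref{lem:frac:anchor-occupancy} at \(a=\sqrt x\), \(\P{E_1}\le 4r^{>\sqrt x}/\delta\). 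The number of arrivals counted in \(E_2\) is Poisson with mean \(\lambda(2\sqrt x/\delta)\Fbar(x)\), so \(1-e^{-y}\le y\) gives \(\P{E_2}\le 2\lambda\sqrt x\,\Fbar(x)/\delta\).

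For the pathwise claim, work on \(E_1^c\cap E_2^c\). If the system is empty at \(0-\), the tagged job becomes the anchor immediately. Otherwise, the current anchor has original size at most \(Z(0-)\le\sqrt x\), so its remaining size is at most \(\sqrt x\). It is served at rate one or at rate \(\gamma\ge\delta/2\) by~\eqref{eq:frac:condition-margin-shape}, and the anchor role is non-preemptive. Hence it completes at some time \(\sigma\le 2\sqrt x/\delta\).

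Next I will check that the tagged job is the anchor chosen at \(\sigma\). This needs \(\sqrt x<x\), which holds since \(x>\ell_t\to\infty\) by~\eqref{eq:frac:condition-large-threshold}. The jobs present at \(\sigma\) are of two kinds. Jobs present at \(0-\) have original size at most \(\sqrt x<x\). Jobs arriving in \((0,\sigma]\) have size at most \(x\) on \(E_2^c\); ties go to the tagged job by arrival order. So the tagged job has the largest original size and becomes the anchor. It has received no negative progress, so its remaining size is at most \(x\). From then on it is served at rate \(1\) or \(\gamma(x)\ge\gamma(\ell_t)\), using monotonicity of \(\varepsilon\) and \(\gamma\le\delta\le1\). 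It therefore finishes by
\[
\frac{2\sqrt x}{\delta}+\frac{x}{\gamma(\ell_t)}
\le\frac{2\sqrt t}{\delta}+\frac{h_t}{\gamma(\ell_t)}
\le t,
\]
where the first inequality uses \(x\le h_t\le t\) (as \(\gamma\le1\)) and the second uses~\eqref{eq:frac:condition-completion}.

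The main obstacle is this bookkeeping at \(\sigma\), in particular the choice of horizon in \(E_2\). The horizon must be deterministic so that the Poisson bound applies, yet it must cover the random time \(\sigma\). Taking it to be \((0,2\sqrt x/\delta]\) handles both requirements, since \(\sigma\le 2\sqrt x/\delta\) on \(E_1^c\). I also need to make sure that a job arriving while the tagged job is already the anchor cannot displace it; this holds because the anchor role is non-preemptive. Any ``sufficiently large \(t\)'' requirement enters only through~\eqref{eq:frac:condition-completion} and through \(\ell_t>1\), which guarantees \(\sqrt x<x\).
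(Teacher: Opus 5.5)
Your proposal is correct and follows essentially the same route as the paper. It uses the same two collision events: a job larger than $\sqrt x$ present at $0-$, and a job larger than $x$ arriving in $(0,2\sqrt x/\delta]$. It gives the same pathwise argument that the tagged job otherwise becomes the anchor by time $2\sqrt x/\delta$ and finishes within a further $x/\gamma(\ell_t)$. It closes with the same union bound, using PASTA with Lemma~\ref{lem:frac:anchor-occupancy} and the Poisson count. The only additions relative to the paper are your explicit treatment of the empty-system case and the remark that the deterministic horizon covers $\sigma$.
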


% That is, a large tagged job is late only if a job larger than \(\sqrt x\)
% is present when it arrives, or a job larger than \(x\) arrives while it
% waits for the anchor role; the two terms bound the probabilities of these
% two events.

\begin{proof}
Let \(C_0(x)\) be the event that a job with original size larger than
\(\sqrt x\) is present at time \(0-\), and let \(C_1(x)\) be the event that
a job larger than \(x\) arrives during \((0,2\sqrt x/\delta]\).  Since
\(x>\ell_t\to\infty\), we may assume \(x>\sqrt x\).  On
\(C_0(x)^c\cap C_1(x)^c\), no job larger than \(x\) is present at time
\(0-\) or arrives before the current anchor completes, and ties with jobs
of size exactly \(x\) are broken in favor of the tagged job by arrival
order.  The path described before \eqref{eq:frac:condition-completion} therefore applies: the tagged
job becomes the anchor within \(2\sqrt x/\delta\) time units and completes
within a further \(x/\gamma(\ell_t)\).  Since \(x\le h_t\le t\) and by
condition~\eqref{eq:frac:condition-completion},
\[
    \frac{2\sqrt x}{\delta}+\frac{x}{\gamma(\ell_t)} \le \frac{2\sqrt h_t}{\delta}+\frac{h_t}{\gamma(\ell_t)}
        \le\frac{2\sqrt t}{\delta}+\frac{h_t}{\gamma(\ell_t)}
    \le t,
\]
which shows
\[
    \{T_x^{\DynoSPLIT}>t\}\subseteq C_0(x)\cup C_1(x).
\]

By PASTA and Lemma~\ref{lem:frac:anchor-occupancy} at threshold
\(a=\sqrt x\), the probability \(\P{C_0(x)}\) is at most
\(4r^{>\sqrt x}/\delta\).  Arrivals larger than \(x\) form a Poisson
process with rate \(\lambda\Fbar(x)\), so \(\P{C_1(x)}\) is at most \((2\sqrt x/\delta)\lambda\Fbar(x)\).  A union bound proves~\eqref{eq:frac:collision-bound}.
\end{proof}

Both terms of~\eqref{eq:frac:collision-bound} are at most a constant
multiple of \(r^{>\sqrt x}\) (to be proved in the next proposition). Hence we need the following requirement to make $\P{T_{x}^{\DynoSPLIT}>t,\ \ell_t<x\le h_t}$ negligible:
% # here no need to explain too much. Averaging over jobs is very normal and nothing non-trivial is here.
% , so
% averaging over the tagged size costs at most the factor \(\Fbar(\ell_t)\).
% Making the resulting product negligible relative to the super-server lower
% bound is the requirement that pushes \(\ell_t\) upward:
\begin{equation}
    r^{>\sqrt{\ell_t}}\,\Fbar(\ell_t)
    =o\bigl(\Fbar(\delta t)\bigr).
    \label{eq:frac:condition-collision}
\end{equation}

\begin{proposition}[Large-job contribution]
\label{prop:frac:large}
Under
\eqref{eq:frac:condition-completion},
\eqref{eq:frac:condition-large-threshold}, and
\eqref{eq:frac:condition-collision},
\begin{equation}
    \P{T_{x}^{\DynoSPLIT}>t,\ \ell_t<x\le h_t}
    =o\bigl(\Fbar(\delta t)\bigr).
    \label{eq:frac:large-contribution}
\end{equation}
\end{proposition}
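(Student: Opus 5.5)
We prove Proposition~\ref{prop:frac:large} by conditioning on the tagged size and averaging the bound of Lemma~\ref{lem:frac:collision} over the range $(\ell_t,h_t]$. Since the tagged size is distributed as $S$ and is independent of the stationary state seen at arrival (PASTA),
\[
    \P{T_x^{\DynoSPLIT}>t,\ \ell_t<x\le h_t}
    =\int_{(\ell_t,h_t]}\P{T_y^{\DynoSPLIT}>t}\,dF(y)
    \le\int_{(\ell_t,h_t]}\Bigl(\frac{4r^{>\sqrt y}}{\delta}+\frac{2\lambda\sqrt y}{\delta}\Fbar(y)\Bigr)dF(y),
\]
valid for all large $t$ by Lemma~\ref{lem:frac:collision}.

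The first step is to dominate the second summand by the first. For $y>1$ we have $\sqrt y<y$, so
\[
    r^{>\sqrt y}=\lambda\E{S\one\{S>\sqrt y\}}\ge\lambda\sqrt y\,\Fbar(\sqrt y)\ge\lambda\sqrt y\,\Fbar(y).
\]
Hence the integrand is at most $\frac{6}{\delta}r^{>\sqrt y}$ once $\ell_t>1$, which holds eventually by~\eqref{eq:frac:condition-large-threshold}.

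The second step uses monotonicity. The map $y\mapsto r^{>\sqrt y}$ is nonincreasing, so on $(\ell_t,h_t]$ it is bounded by $r^{>\sqrt{\ell_t}}$. The integral is therefore at most
\[
    \frac{6}{\delta}\,r^{>\sqrt{\ell_t}}\,\P{\ell_t<S\le h_t}\le\frac{6}{\delta}\,r^{>\sqrt{\ell_t}}\,\Fbar(\ell_t).
\]
By condition~\eqref{eq:frac:condition-collision} this is $o(\Fbar(\delta t))$, which is~\eqref{eq:frac:large-contribution}.

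The main point requiring care is the first step: comparing the Poisson-collision term $\lambda\sqrt y\,\Fbar(y)/\delta$ with $r^{>\sqrt y}$, so that only condition~\eqref{eq:frac:condition-collision} is needed. The elementary inequality $\E{S\one\{S>a\}}\ge a\Fbar(a)$ with $a=\sqrt y$ settles it without regular-variation estimates. The remaining subtlety is that Lemma~\ref{lem:frac:collision} holds uniformly in $x\in(\ell_t,h_t]$ only for large $t$. This is exactly how the lemma is stated, so the integral bound applies for all sufficiently large $t$.
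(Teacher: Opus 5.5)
Your proposal is correct and follows the paper's proof essentially step for step. You bound the collision term via $\lambda\sqrt x\,\Fbar(x)\le\lambda\sqrt x\,\Fbar(\sqrt x)\le r^{>\sqrt x}$, use monotonicity of $r^{>y}$ to get the uniform bound $6r^{>\sqrt{\ell_t}}/\delta$, and average over the tagged size to pick up the factor $\Fbar(\ell_t)$, which condition~\eqref{eq:frac:condition-collision} makes $o(\Fbar(\delta t))$.
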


\begin{proof}
Fix \(x>\ell_t\).  Since \(x\ge\sqrt x\), we have
\(\Fbar(x)\le\Fbar(\sqrt x)\), and
\[
    \lambda\sqrt x\,\Fbar(\sqrt x)
    \le\lambda\E{S\one\{S>\sqrt x\}}
    =r^{>\sqrt x}.
\]
Hence the second term of~\eqref{eq:frac:collision-bound} is at most
\(2r^{>\sqrt x}/\delta\), and because \(r^{>y}\) is nonincreasing in \(y\),
\[
    \P{T_x^{\DynoSPLIT}>t}
    \le\frac{6r^{>\sqrt x}}{\delta}
    \le\frac{6r^{>\sqrt{\ell_t}}}{\delta}
        \qquad\text{for all \(\ell_t<x\le h_t\)}.
\]
Conditioning on the tagged size and integrating over \(\ell_t<x\le h_t\)
multiplies this uniform bound by at most \(\Fbar(\ell_t)\), so
\[
    \P{T_{x}^{\DynoSPLIT}>t,\ \ell_t<x\le h_t}
        \le\frac{6}{\delta}\,r^{>\sqrt{\ell_t}}\,\Fbar(\ell_t),
\]
which is \(o(\Fbar(\delta t))\) by~\eqref{eq:frac:condition-collision}.
\end{proof}

% The first two contributions in the roadmap are now negligible.  The
% remaining contribution depends only on the upper threshold \(h_t\).

\subsection{Huge jobs: matching the tail boundary}
\label{sec:frac:huge}

We now prove the huge-job bound~\eqref{eq:frac:roadmap-huge}.  As in the
roadmap, huge jobs are controlled directly by their size tail, and the
proof does not need a completion argument:
\begin{equation}
    \P{T_{x}^{\DynoSPLIT}>t,\ x>h_t}
    \le \P{x>h_t}=\Fbar(h_t).
    \label{eq:frac:huge-trivial-bound}
\end{equation}
This bound determines the only additional requirement on the upper
threshold:
\begin{equation}
    \frac{h_t}{\delta t}\longrightarrow1.
    \label{eq:frac:condition-huge-boundary}
\end{equation}
Note that the completion condition~\eqref{eq:frac:condition-completion} and
\(\gamma(\ell_t)<\delta\) already give \(h_t<\delta t\).
% , so
% \eqref{eq:frac:condition-huge-boundary} only asks for the matching
% asymptotic lower bound.
In addition, when condition \eqref{eq:frac:condition-huge-boundary} holds, for every fixed \(\eta\in(0,1)\), we have
\[
    (1-\eta)\delta t\le h_t\le\delta t
\]
for all sufficiently large \(t\).  Monotonicity and regular variation give, as $t\to\infty$:
\[
    1
    \le \frac{\Fbar(h_t)}{\Fbar(\delta t)}
    \le \frac{\Fbar((1-\eta)\delta t)}{\Fbar(\delta t)}
    \longrightarrow (1-\eta)^{-\alpha}.
\]
Letting \(\eta\downarrow0\) yields
\begin{equation}
    \Fbar(h_t)\sim\Fbar(\delta t).
        \label{eq:frac:huge-tail}
\end{equation}

\subsection{Choosing the thresholds and completing the proof}
\label{sec:frac:combining}

We now fix the capacity margin and prove Theorem~\ref{thm:frac:optimality}
in the following explicit form.

\begin{theorem}[Strong tail optimality of \DynoSPLIT, explicit form]
\label{thm:frac:explicit}
Suppose \(n\ge2\), \((n-1)/n\le\rho<1\), and the job-size distribution has a
regularly varying tail with tail index \(\alpha>1\).  Let \DynoSPLIT use
the capacity margin
\begin{equation}
    \varepsilon(z):=\frac{\delta}{2+\log(1+z)}.
    \label{eq:frac:explicit-margin}
\end{equation}
Then
\[
    \lim_{t\to\infty}
    \frac{\P{T^{\DynoSPLIT}>t}}{\P{S>\delta t}}
    =1.
\]
\end{theorem}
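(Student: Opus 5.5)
We take \(\varepsilon\) from~\eqref{eq:frac:explicit-margin} and specify the thresholds explicitly. Then we check, one by one, the conditions accumulated in Section~\ref{sec:frac:proof}. These are \eqref{eq:frac:margin-basic}, \eqref{eq:frac:condition-margin-shape}, \eqref{eq:frac:condition-cap}, \eqref{eq:frac:condition-anchor}, \eqref{eq:frac:condition-small-moment}, \eqref{eq:frac:condition-small-exponent}, \eqref{eq:frac:condition-completion}, \eqref{eq:frac:condition-large-threshold}, \eqref{eq:frac:condition-collision} and \eqref{eq:frac:condition-huge-boundary}. Once they hold, Propositions~\ref{prop:frac:small} and~\ref{prop:frac:large} and~\eqref{eq:frac:huge-tail} plug into~\eqref{eq:frac:three-way-split}. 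This gives \(\limsup \P{T^{\DynoSPLIT}>t}/\Fbar(\delta t)\le1\), and the super-server bound~\eqref{eq:lower bound super server} gives the matching \(\liminf\ge1\).

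The margin conditions are immediate: \(0<\varepsilon(z)\le\delta/2<\delta\), \(\varepsilon\) is nonincreasing, and \(\varepsilon(z)\to0\). The proposed thresholds are
\[
    \hatA_t:=t^{k}\ \text{with}\ k:=\frac{2(\alpha+1)}{\alpha-1},\qquad
    \ell_t:=\frac{t}{(\log t)^{3}},\qquad
    h_t:=\gamma(\ell_t)\Bigl(t-\frac{2\sqrt t}{\delta}\Bigr).
\]

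\emph{Conditions on \(\ell_t\) and \(h_t\).} Condition~\eqref{eq:frac:condition-completion} holds by construction, provided \(h_t>\ell_t\), which is clear. Conditions~\eqref{eq:frac:condition-cap} and~\eqref{eq:frac:condition-large-threshold} are clear as well. Since \(\gamma(\ell_t)=\delta-\delta/(2+\log(1+\ell_t))\to\delta\), we get \(h_t/(\delta t)\to1\), which is~\eqref{eq:frac:condition-huge-boundary}. Since \(\log\hatA_t\sim k\log t\), we have \(\varepsilon(\hatA_t)\sim\delta/(k\log t)\). Hence
\[
    \frac{\varepsilon(\hatA_t)\,t}{\ell_t\log t}\asymp\log t\to\infty,
\]
which is~\eqref{eq:frac:condition-small-exponent}.

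\emph{Regular-variation estimates.} All remaining conditions compare regularly varying quantities, and I would handle them with Karamata's theorem and Potter bounds. The basic estimates are \(\Fbar(y)=y^{-\alpha}L(y)\) and \(r^{>y}\sim\frac{\alpha}{\alpha-1}\lambda y\Fbar(y)\). The Potter bounds give \(y^{-\alpha-\eta}\le\Fbar(y)\le y^{-\alpha+\eta}\) for any \(\eta>0\) and large \(y\). For~\eqref{eq:frac:condition-anchor} this gives
\[
    r^{>\hatA_t}+t\Fbar(\hatA_t)\le t^{-k(\alpha-1)+k\eta}+t^{1-k\alpha+k\eta}.
\]
Both exponents are below \(-\alpha-1\) for small \(\eta\), by the choice of \(k\). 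Meanwhile \(\Fbar(\delta t)\ge t^{-\alpha-\eta}\). For~\eqref{eq:frac:condition-collision}, \(\ell_t=t^{1-o(1)}\) gives
\[
    r^{>\sqrt{\ell_t}}\,\Fbar(\ell_t)\le t^{-(\alpha-1)/2-\alpha+O(\eta)+o(1)}=o\bigl(\Fbar(\delta t)\bigr).
\]

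\emph{The moment condition.} For~\eqref{eq:frac:condition-small-moment}, which requires \(\E{(S\wedge\ell_t)^2}/\ell_t=o(1/\log t)\), I would split into three cases.
\begin{itemize}
    \item If \(\alpha>2\), the second moment is finite and the ratio is \(O(1/\ell_t)\).
    \item If \(1<\alpha<2\), Karamata gives \(\E{(S\wedge\ell)^2}\asymp\ell^{2}\Fbar(\ell)\), so the ratio is \(\asymp\ell_t\Fbar(\ell_t)=\ell_t^{1-\alpha+o(1)}\).
    \item If \(\alpha=2\), the truncated second moment is slowly varying, so the ratio is \(\ell_t^{-1+o(1)}\).
\end{itemize}
In every case the ratio decays polynomially in \(t\), so it is \(o(1/\log t)\).

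I expect the main obstacle to be the bookkeeping of the slowly varying factors. Each condition mixes polynomial powers with \(\log t\) factors and with the slowly varying \(L\) evaluated at different arguments (\(\hatA_t\), \(\sqrt{\ell_t}\), \(\ell_t\), \(\delta t\)). The approach throughout is to fix a small \(\eta\) and apply the Potter bounds uniformly, using the strict polynomial gaps built into \(k\) and into the \(\sqrt{\cdot}\) in~\eqref{eq:frac:condition-collision} to absorb every \(t^{\pm\eta}\) and logarithmic factor.
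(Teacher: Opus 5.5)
Your proposal is correct and follows the paper's proof. Both collect the accumulated conditions, fix explicit thresholds, verify them with Potter bounds and Karamata's theorem, and combine Propositions~\ref{prop:frac:small} and~\ref{prop:frac:large} with \eqref{eq:frac:huge-tail} and the super-server bound. The only real difference is the threshold choice: the paper's $\alpha$-free superpolynomial $\hatA_t=\exp\{(\log t)^2\}$ gives $\varepsilon(\hatA_t)=\Theta((\log t)^{-2})$ and so needs $\ell_t=t/(\log t)^6$, whereas your $\alpha$-dependent polynomial $\hatA_t=t^k$ keeps $\varepsilon(\hatA_t)=\Theta(1/\log t)$, so $\ell_t=t/(\log t)^3$ suffices; since the thresholds are proof devices rather than part of the policy, the dependence on $\alpha$ is harmless, and each of your verifications goes through.
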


\begin{proof}
The three bounds of the roadmap were proved under requirements on the
capacity margin and on the thresholds \(\hatA_t\), \(\ell_t\), and
\(h_t\), which we first collect.
\begin{enumerate}\renewcommand{\theenumi}{\roman{enumi}}\renewcommand{\labelenumi}{\textup{(\theenumi)}}
    \item The capacity margin: \(0<\varepsilon(z)\le\delta/2\),
          \(\varepsilon\) is nonincreasing, and \(\varepsilon(z)\to0\) as
          \(z\to\infty\); these are \eqref{eq:frac:margin-basic} and
          \eqref{eq:frac:condition-margin-shape}.
    \item The anchor threshold: \(\hatA_t\to\infty\) and
          \(\ell_t\ll\hatA_t\), which is \eqref{eq:frac:condition-cap}, and
          \(r^{>\hatA_t}+t\Fbar(\hatA_t)=o(\Fbar(\delta t))\), which is
          \eqref{eq:frac:condition-anchor}.
    \item The small/large cutoff: \(\ell_t\to\infty\), which is
          \eqref{eq:frac:condition-large-threshold};
          \(\varepsilon(\hatA_t)t/(\ell_t\log t)\to\infty\), which is
          \eqref{eq:frac:condition-small-exponent};
          \(\E{(S\wedge\ell_t)^2}/\ell_t=o(\varepsilon(\hatA_t))\), which is
          \eqref{eq:frac:condition-small-moment}; and
          \(r^{>\sqrt{\ell_t}}\,\Fbar(\ell_t)=o(\Fbar(\delta t))\), which is
          \eqref{eq:frac:condition-collision}.
    \item The large/huge cutoff:
          \(\ell_t<h_t\le\gamma(\ell_t)(t-2\sqrt t/\delta)\), which is
          \eqref{eq:frac:condition-completion}, and \(h_t/(\delta t)\to1\),
          which is \eqref{eq:frac:condition-huge-boundary}.
\end{enumerate}

It is straightforward that the capacity margin \eqref{eq:frac:explicit-margin}  satisfies (i).  For the thresholds we take
\begin{equation}
    \hatA_t:=\exp\{(\log t)^2\},
    \qquad
    \ell_t:=\frac{t}{(\log t)^6},
    \qquad
    h_t:=\gamma(\ell_t)\Bigl(t-\frac{2\sqrt t}{\delta}\Bigr)=(\delta - \varepsilon(z))\Bigl(t-\frac{2\sqrt t}{\delta}\Bigr).
    \label{eq:frac:explicit-thresholds}
\end{equation}
Lemma~\ref{lem:frac:rv-verification} in
Appendix~\ref{app:frac:rv} verifies that these choices satisfy (ii)--(iv).

With all requirements in place, split the tagged job by size as in
\eqref{eq:frac:three-way-split}:
\begin{align*}
    \P{T^{\DynoSPLIT}>t}
    &\le \P{T_{x}^{\DynoSPLIT}>t,\ x\le\ell_t}\\
    &\quad+\P{T_{x}^{\DynoSPLIT}>t,\ \ell_t<x\le h_t}
       +\P{x>h_t}.
\end{align*}
Propositions~\ref{prop:frac:small} and \ref{prop:frac:large} make the
first two terms \(o(\Fbar(\delta t))\), and \eqref{eq:frac:huge-tail}
makes the last term \((1+o(1))\Fbar(\delta t)\).  Hence we have that
\[
    \limsup_{t\to\infty}
    \frac{\P{T^{\DynoSPLIT}>t}}{\P{S>\delta t}}
        \le1.
\]
This matches the super-server lower bound~\eqref{eq:lower bound super server}, and completes the proof of strong tail optimality.\end{proof}

% !TEX root = ../main.tex
\section{Evaluation}
\label{sec:evaluation}

We evaluate \DynoSPLIT by simulation in two high-load settings and compare it with SRPT-$n$ and \SPLIT. We plot the \emph{normalized tail} $\P{T>t}/\P{S>n(1-\rho)t}$, namely the response time tail divided by the super-server lower bound of Section~\ref{sec:theory:lowerbound}. By~\eqref{eq:lower bound super server}, the normalized tail of every policy is asymptotically at least one. A normalized tail converging to one demonstrates strong tail optimality. 

Figure~\ref{fig:eval} shows the normalized tails of SRPT-$n$ in blue, \SPLIT in red, and \DynoSPLIT in orange, the latter with the capacity margin of Theorem~\ref{thm:frac:explicit}. In both settings, 5(a) and 5(b), the normalized tail of \DynoSPLIT rises to a peak and then decreases toward one, while those of the other two policies do not.\endnote{The normalized tail of \DynoSPLIT does not reach one in Figure~\ref{fig:eval} because Theorem~\ref{thm:frac:optimality} concerns the limit $t\to\infty$. With the $10^{10}$ simulated jobs, which take about four days to simulate, the response time tail can be estimated reliably only up to $t\approx10^4$. Simulating more jobs would extend the curves to larger $t$, and we expect the decrease toward one to continue.} The normalized tail of SRPT-$n$ levels off above one, which is consistent with the weak tail optimality of SRPT-$n$ \citep{split}. The normalized tail of \SPLIT grows without bound: As described in Section~\ref{sec:intro:highload}, whenever a huge job holds the dedicated server, the other $n-1$ servers face a load they cannot carry, and every job arriving in the meantime suffers a response time comparable to the size of the huge job. 

\begin{figure}[t]
\FIGURE
{% Subcaptions and graphics
\subcaptionbox{$n=2$, $\rho=0.55$, $\delta=0.9$\label{subfig:eval:n2}}
{\includegraphics[width=0.48\textwidth]{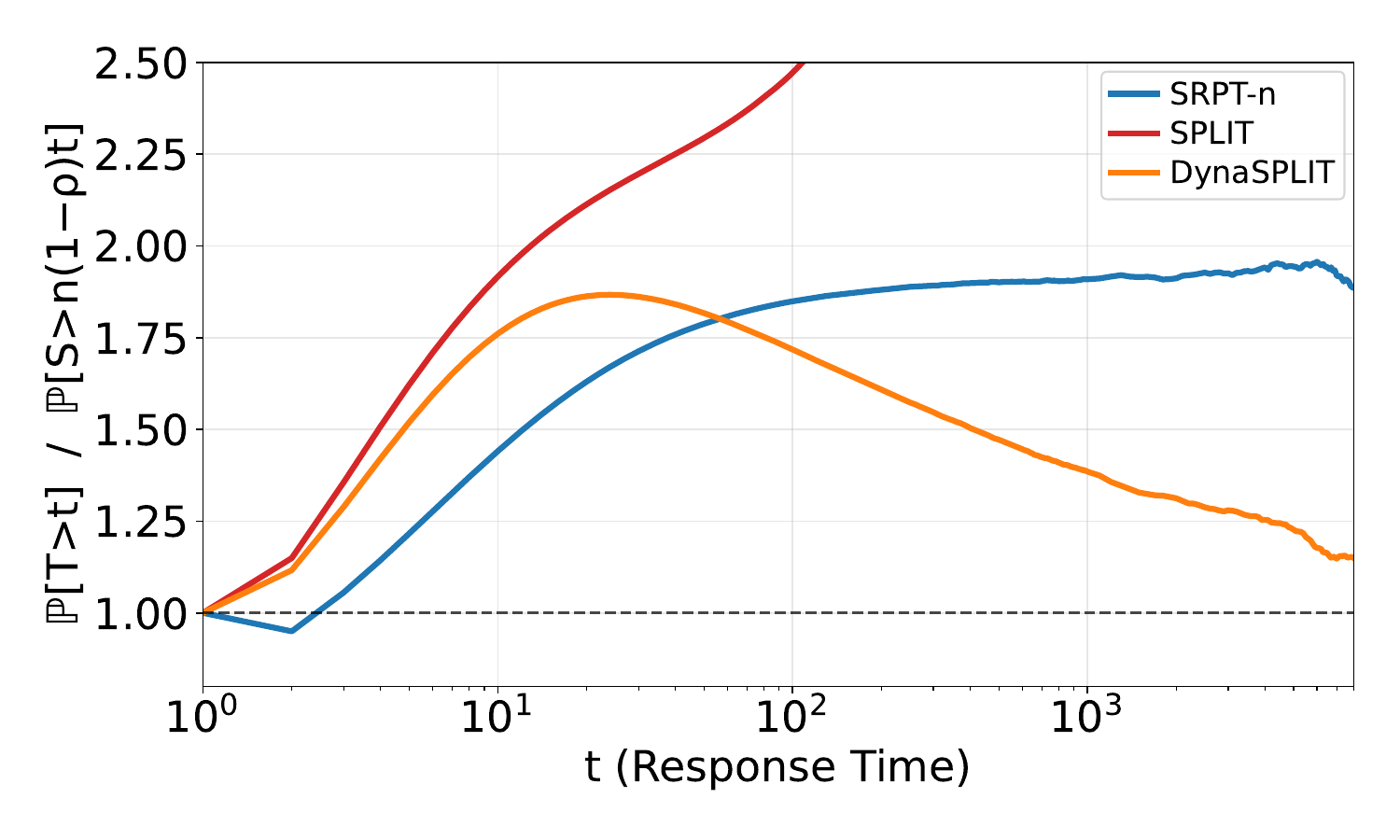}}
\hfill\subcaptionbox{$n=10$, $\rho=0.92$, $\delta=0.8$\label{subfig:eval:n10}}
{\includegraphics[width=0.48\textwidth]{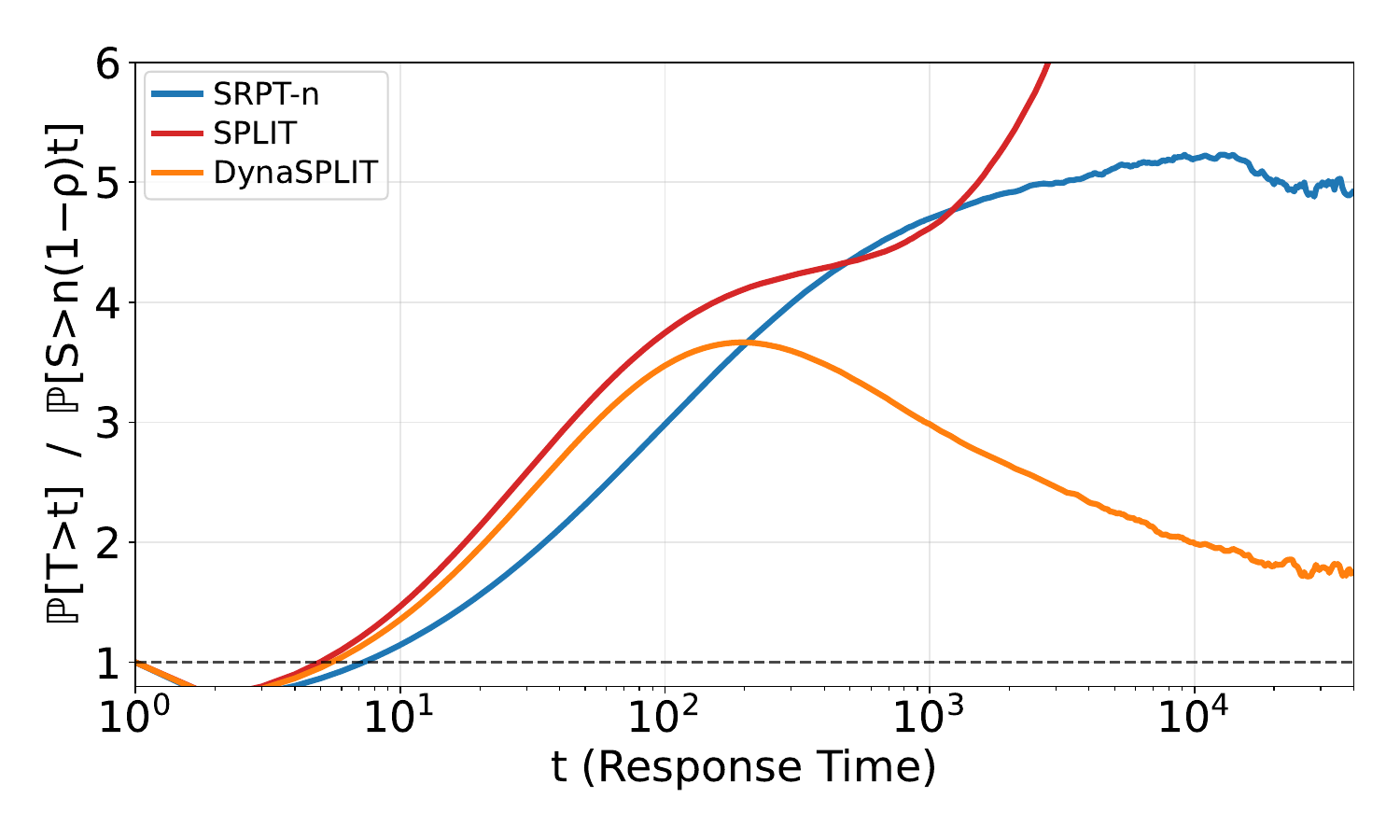}}
}
{Normalized response time tail $\P{T>t}/\P{S>n(1-\rho)t}$ from $10^{10}$ simulated jobs; Pareto job sizes with $\P{S>t}=t^{-1.8}$ for $t\ge1$.\label{fig:eval}}
{}
% Alt text: Two log-scale plots of the normalized tail against the response time. In both, the SPLIT curve grows without bound, the SRPT-n curve rises and levels off above one, and the DynaSPLIT curve rises to a peak and then decreases toward the dashed line at one; in the plot for two servers it ends close to the dashed line.
\end{figure}

% !TEX root = ../main.tex
% Related work is placed after the technical sections: the heavy-tailed
% background is in the introduction, and this section covers the
% light-tailed world, which is only loosely related to this paper.
\section{Related Work in the Light-tailed world}
\label{sec:related-work}

The heavy-tailed world, to which this paper belongs, was reviewed in Section~\ref{sec:introduction}. It has been shown that tail-optimal scheduling is very different for heavy-tailed and light-tailed job sizes \citep{wierman2012tail}. This section reviews the light-tailed world, which Section~\ref{sec:introduction} set aside.

\paragraph{Single-server systems.}
Under light-tailed job sizes, the response time tail decays exponentially under FCFS \citep{abate1995exponential}, and FCFS is weakly tail optimal, that is, its decay rate is the best possible \citep{boxma2007tails}. Size-based policies fare worse: Processor Sharing has a slower decay rate \citep{mandjes2006large}, and SRPT has the worst decay rate of all, because a large job can be starved by a stream of smaller arrivals \citep{nuyens2006large}. FCFS is not strongly tail optimal, however. Nudge \citep{grosof2021nudge} improves on FCFS by letting a small job overtake the large job immediately ahead of it, at most once, and this idea was refined in \citet{vanhoudt2022nudge,charlet2024nudgem}. Boost \citep{yu2024strongly} is the first strongly tail optimal policy, and Harlev et al.~\citeyearpar{harlev2025gittins} extend strong tail optimality to unknown job sizes with a Gittins-based policy.

\paragraph{Multi-server systems.}
The only result on tail-optimal scheduling for multi-server systems with light-tailed job sizes is due to Yu et al.~\citeyearpar{yu2025tale}: Boost is strongly tail optimal in the M/G/$n$ in the heavy-traffic limit $\rho\to1$, whereas its performance degrades away from heavy traffic. The same paper proposes CombinedBoost, a heuristic that gives large jobs some priority in order to keep all servers busy, and shows empirically that it performs well at every load, but without a proof. 

% !TEX root = ../main.tex
\section{Conclusion}
\label{sec:conclusion}

This paper gives the first strongly tail optimal scheduling policy for the M/G/$n$ with heavy-tailed job sizes in the high-load regime. It thereby closes the open problem posed by Li et al.~\citeyearpar{split}: Together with \SPLIT, strongly tail optimal policies now exist over the entire stability region. Our central finding is that the high-load regime changes what optimality demands. The best possible tail is that of a single server of speed $n$, and matching it leaves no room for a fixed reserve: A huge job must receive essentially all of the spare capacity, while the other jobs need a positive share of that same capacity to keep up with their arrivals. \DynoSPLIT resolves this dilemma by making the division of the spare capacity depend on the size of the largest job, with a margin that is positive for every finite size and vanishes as the size grows.

\begingroup \parindent 0pt \parskip 0.0ex \def\enotesize{\normalsize} \theendnotes \endgroup

% Appendix here
% Options are (1) APPENDIX (with or without general title) or
%             (2) APPENDICES (if it has more than one unrelated sections)
\begin{APPENDICES}
% !TEX root = ../main.tex
\section{Verification of the Threshold Conditions}
\label{app:frac:rv}

This appendix proves Lemma~\ref{lem:frac:rv-verification}: the capacity
margin~\eqref{eq:frac:explicit-margin} and the
thresholds~\eqref{eq:frac:explicit-thresholds} chosen in
Section~\ref{sec:frac:combining} satisfy every condition listed in the
proof of Theorem~\ref{thm:frac:explicit}.  The verification uses two
standard facts about a regularly varying tail with index \(\alpha>1\)
\citep{bgt}.  The first is Potter's bound: for every \(\eta>0\), there are
constants \(C_\eta\ge1\) and \(x_\eta\) such that, for all
\(y\ge x\ge x_\eta\),
\begin{equation}
    C_\eta^{-1}\left(\frac{y}{x}\right)^{-\alpha-\eta}
    \le \frac{\Fbar(y)}{\Fbar(x)}
    \le C_\eta\left(\frac{y}{x}\right)^{-\alpha+\eta}.
    \label{eq:frac:potter}
\end{equation}
Taking \(x=x_\eta\) in the upper bound shows in particular that
\(\Fbar(y)\le Cy^{-\alpha+\eta}\) for all large \(y\), with a constant
\(C\) depending on \(\eta\).  The second is Karamata's theorem, which
bounds the rate of work arriving in jobs larger than \(y\).

\begin{lemma}[Tail work]
\label{lem:frac:rv-bounds}
There is a constant \(C<\infty\) such that, for all sufficiently large
\(y\),
\begin{equation}
    r^{>y}\le Cy\Fbar(y).
    \label{eq:frac:rv-bounds}
\end{equation}
\end{lemma}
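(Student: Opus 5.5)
The plan is to write the tail work as an integral of the tail function and then apply Potter's bound~\eqref{eq:frac:potter} with a small exponent loss. By definition~\eqref{eq:r_x} and Fubini (the layer-cake formula for \(S\one\{S>y\}\)),
\[
    r^{>y}=\lambda\E{S\one\{S>y\}}
    =\lambda\Bigl(y\Fbar(y)+\int_y^\infty \Fbar(u)\,du\Bigr).
\]
This splits the problem into a boundary term and an integral of the tail beyond \(y\). The boundary term \(\lambda y\Fbar(y)\) is already of the required form, so only the integral needs work.

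For the integral, fix \(\eta:=(\alpha-1)/2>0\), which is admissible because \(\alpha>1\). Take the corresponding constants \(C_\eta\) and \(x_\eta\) from Potter's bound. For \(y\ge x_\eta\) and every \(u\ge y\), the upper half of~\eqref{eq:frac:potter} gives \(\Fbar(u)\le C_\eta\Fbar(y)(u/y)^{-\alpha+\eta}\). Integrating from \(y\) to \(\infty\),
\[
    \int_y^\infty\Fbar(u)\,du
    \le C_\eta\Fbar(y)\,y^{\alpha-\eta}\int_y^\infty u^{-\alpha+\eta}\,du
    =\frac{C_\eta}{\alpha-\eta-1}\,y\Fbar(y).
\]
The integral converges because \(\alpha-\eta-1=(\alpha-1)/2>0\). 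Combining the two pieces gives~\eqref{eq:frac:rv-bounds} for all \(y\ge x_\eta\), with \(C:=\lambda\bigl(1+2C_\eta/(\alpha-1)\bigr)\).

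The only point needing care is the integration step. Potter's bound must be applied with the reference point fixed at \(y\) and \(u\) ranging above it, and \(\eta\) must be chosen small enough that the exponent \(-\alpha+\eta\) stays strictly below \(-1\). Given the form in which~\eqref{eq:frac:potter} is stated, this is immediate. Alternatively, one could cite Karamata's theorem, which gives the sharper asymptotic \(r^{>y}\sim\lambda\frac{\alpha}{\alpha-1}y\Fbar(y)\), but the Potter route is self-contained.
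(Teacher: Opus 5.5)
Your proof is correct. It uses the same tail-integral decomposition $r^{>y}=\lambda\bigl(y\Fbar(y)+\int_y^\infty\Fbar(u)\,du\bigr)$ as the paper, but bounds the integral with a different tool.

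The paper cites Karamata's theorem for the exact asymptotic $\int_y^\infty\Fbar(u)\,du\sim y\Fbar(y)/(\alpha-1)$. You instead apply the upper half of Potter's bound \eqref{eq:frac:potter} with $\eta=(\alpha-1)/2$, with the reference point fixed at $y$, and integrate a power.

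The Karamata route is a one-line citation and gives the sharp constant: any $C>\lambda\alpha/(\alpha-1)$ works for large $y$. The lemma does not need sharpness.

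Your route gives a looser but fully explicit constant $\lambda\bigl(1+2C_\eta/(\alpha-1)\bigr)$ and an explicit range $y\ge x_\eta$. It also relies only on the inequality the appendix already states, not on an additional theorem. In fact, your Potter estimate is exactly the dominating bound in the standard dominated-convergence proof of Karamata's theorem, so you are in effect proving the part of Karamata you need.

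Both arguments use $\alpha>1$ at the same point: to make $\int_y^\infty\Fbar(u)\,du$ finite and of order $y\Fbar(y)$.
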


\begin{proof}
The tail-integral identity gives
\(\E{S\one\{S>y\}}=y\Fbar(y)+\int_y^\infty\Fbar(u)\,du\), and Karamata's
theorem gives \(\int_y^\infty\Fbar(u)\,du\sim y\Fbar(y)/(\alpha-1)\)
because \(\alpha>1\).  Multiplying by \(\lambda\)
proves~\eqref{eq:frac:rv-bounds}.
\end{proof}

Throughout the rest of this appendix, \(C\) denotes a finite constant
whose value may change from line to line.

\begin{lemma}[Verification of the threshold conditions]
\label{lem:frac:rv-verification}
Let \(\varepsilon\) be given by \eqref{eq:frac:explicit-margin} and
\(\hatA_t\), \(\ell_t\), \(h_t\) by \eqref{eq:frac:explicit-thresholds}.
Then conditions
\eqref{eq:frac:condition-cap},
\eqref{eq:frac:condition-anchor},
\eqref{eq:frac:condition-large-threshold},
\eqref{eq:frac:condition-small-exponent},
\eqref{eq:frac:condition-small-moment},
\eqref{eq:frac:condition-collision},
\eqref{eq:frac:condition-completion}, and
\eqref{eq:frac:condition-huge-boundary} hold.
\end{lemma}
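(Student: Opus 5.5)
The plan is to verify each condition directly from the explicit formulas, using Potter's bound~\eqref{eq:frac:potter} and Lemma~\ref{lem:frac:rv-bounds}. Throughout we use one consequence of~\eqref{eq:frac:potter}: for any \(\eta>0\) and all large \(t\), \(\Fbar(\delta t)\ge C^{-1}t^{-\alpha-\eta}\). So every target \(o(\Fbar(\delta t))\) reduces to showing the quantity is \(O(t^{-\alpha-1})\), or is superpolynomially small.

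We start with the easy conditions. For~\eqref{eq:frac:condition-cap}, \(\hatA_t=e^{(\log t)^2}\to\infty\) grows faster than any power of \(t\), while \(\ell_t\le t\); hence \(\ell_t\ll\hatA_t\). Condition~\eqref{eq:frac:condition-large-threshold} is immediate. For~\eqref{eq:frac:condition-anchor}, Lemma~\ref{lem:frac:rv-bounds} and Potter give
\[
r^{>\hatA_t}+t\Fbar(\hatA_t)\le C(\hatA_t+t)\hatA_t^{-\alpha+\eta}\le C\,e^{-(\alpha-1-\eta)(\log t)^2},
\]
which is superpolynomially small once \(\eta<\alpha-1\).

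For the margin, note that \(\varepsilon(\hatA_t)=\delta/(2+\log(1+e^{(\log t)^2}))\sim\delta/(\log t)^2\). This yields~\eqref{eq:frac:condition-small-exponent}:
\[
\frac{\varepsilon(\hatA_t)t}{\ell_t\log t}\sim \delta(\log t)^{6-2-1}=\delta(\log t)^3\to\infty.
\]
Condition~\eqref{eq:frac:condition-small-moment} is the most delicate step. We write \(\E{(S\wedge\ell)^2}=2\int_0^\ell u\Fbar(u)\,du\) and split on \(\alpha\):
\begin{itemize}
\item If \(\alpha>2\), this quantity is bounded, so the ratio is \(O(1/\ell_t)=O((\log t)^6/t)=o((\log t)^{-2})\).
\item If \(1<\alpha\le2\), Potter gives \(\int_0^\ell u\Fbar(u)\,du\le C\ell^{2-\alpha+\eta}\) with \(\eta<\alpha-1\). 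The ratio is then \(O(\ell_t^{1-\alpha+\eta})\), which is polynomially small in \(t\) and hence \(o((\log t)^{-2})\).
\end{itemize}
The key point is that \(\ell_t\) is a power of \(t\) up to logarithms, while the margin decays only logarithmically.

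For~\eqref{eq:frac:condition-collision}, Lemma~\ref{lem:frac:rv-bounds} and Potter give
\[
r^{>\sqrt{\ell_t}}\,\Fbar(\ell_t)\le C\,\ell_t^{(1-\alpha+\eta)/2}\,\ell_t^{-\alpha+\eta}=C\,\ell_t^{-(3\alpha-1)/2+O(\eta)}.
\]
Since \(\ell_t=t/(\log t)^6\), this is \(t^{-(3\alpha-1)/2+O(\eta)}\) up to logarithmic factors. Comparing with \(\Fbar(\delta t)\ge Ct^{-\alpha-\eta}\), we need \((3\alpha-1)/2>\alpha\), i.e.\ \(\alpha>1\), which holds. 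Choosing \(\eta\) small enough therefore gives a polynomial gap that absorbs the logs. This is the step where the choice \(\ell_t\approx t\) is essential, and we expect it, together with the moment condition, to be the main obstacle, since the small-\(\alpha\) exponents must be tracked carefully.

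Finally, we take \(h_t\) in~\eqref{eq:frac:explicit-thresholds} with \(z=\ell_t\), so that \(h_t=\gamma(\ell_t)(t-2\sqrt t/\delta)\).
\begin{itemize}
\item Condition~\eqref{eq:frac:condition-completion} then holds with equality on the right. It remains to check \(\ell_t<h_t\), which follows because \(\gamma(\ell_t)\ge\delta/2\) gives \(h_t\ge\delta t/4\) for large \(t\), while \(\ell_t=o(t)\).
\item For~\eqref{eq:frac:condition-huge-boundary}, \(\varepsilon(\ell_t)\to0\) gives \(\gamma(\ell_t)\to\delta\), and \(2\sqrt t/\delta=o(t)\). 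Hence \(h_t/(\delta t)\to1\).
\end{itemize}
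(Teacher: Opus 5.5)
Your proposal is correct and follows essentially the paper's route: a condition-by-condition check using Potter's bound and the Karamata tail-work bound $r^{>y}\le Cy\Fbar(y)$. The differences are only bookkeeping. You compare against the absolute lower bound $\Fbar(\delta t)\ge C^{-1}t^{-\alpha-\eta}$ and split the moment condition on $\alpha>2$, whereas the paper uses Potter ratio bounds such as $\Fbar(\ell_t)/\Fbar(\delta t)\le C(\log t)^{6(\alpha+\eta)}$ and chooses $\eta$ so that $\beta=\alpha-\eta<2$. Your reading of $h_t$ with $\gamma(\ell_t)=\delta-\varepsilon(\ell_t)$ is also the intended one.
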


\begin{proof}
Throughout, choose \(\eta\in(0,\alpha-1)\) and write
\(\beta:=\alpha-\eta>1\).  The margin at the anchor threshold is
\begin{equation}
    \varepsilon(\hatA_t)
    =\frac{\delta}{2+\log\bigl(1+e^{(\log t)^2}\bigr)}
    =\Theta\bigl((\log t)^{-2}\bigr).
    \label{eq:frac:explicit-cap-margin}
\end{equation}

\emph{Conditions \eqref{eq:frac:condition-cap},
\eqref{eq:frac:condition-large-threshold}, and
\eqref{eq:frac:condition-small-exponent}.}
Clearly \(\hatA_t\to\infty\) and \(\ell_t\to\infty\), and
\(\ell_t/\hatA_t\to0\) because \(\hatA_t=t^{\log t}\) grows faster than
every fixed power of \(t\).  By~\eqref{eq:frac:explicit-cap-margin},
\[
    \frac{\varepsilon(\hatA_t)t}{\ell_t\log t}
        =\Theta\bigl((\log t)^3\bigr)
    \longrightarrow\infty.
\]

\emph{Conditions \eqref{eq:frac:condition-completion} and
\eqref{eq:frac:condition-huge-boundary}.}
Because \(\ell_t/t\to0\) and \(\gamma(\ell_t)\to\delta>0\), we have
\(\ell_t<h_t\) eventually, and \(h_t\) satisfies the upper inequality
in~\eqref{eq:frac:condition-completion} with equality.  Moreover,
\[
    \frac{h_t}{\delta t}
    =\frac{\gamma(\ell_t)}{\delta}\Bigl(1-\frac{2}{\delta\sqrt t}\Bigr)
    \longrightarrow1.
\]

\emph{Condition \eqref{eq:frac:condition-anchor}.}
Lemma~\ref{lem:frac:rv-bounds} gives \(r^{>\hatA_t}\le C\hatA_t\Fbar(\hatA_t)\),
and the upper bound in~\eqref{eq:frac:potter} with \(x=\delta t\) and
\(y=\hatA_t\) gives \(\Fbar(\hatA_t)/\Fbar(\delta t)\le C(\hatA_t/t)^{-\beta}\).
Hence we have that
\begin{align*}
    \frac{r^{>\hatA_t}+t\Fbar(\hatA_t)}{\Fbar(\delta t)}
    &\le C(\hatA_t+t)\left(\frac{\hatA_t}{t}\right)^{-\beta}\\
    &=C\left(
        t^\beta\hatA_t^{1-\beta}
        +t^{\beta+1}\hatA_t^{-\beta}
      \right).
\end{align*}
Because \(\hatA_t=t^{\log t}\) grows faster than every fixed power of
\(t\), both terms converge to zero.

\emph{Condition \eqref{eq:frac:condition-small-moment}.}
Without loss of generality take \(\beta<2\), which is possible because
\(\eta\) may be increased toward \(\alpha-1\).  The tail-integral identity
for the capped second moment gives
\[
    \E{(S\wedge\ell_t)^2}
    =2\int_0^{\ell_t}x\Fbar(x)\,dx.
\]
Using \(\Fbar(x)\le Cx^{-\beta}\) for \(x\ge x_\eta\) and
\(\Fbar\le1\) below, the integral is at most
\(C+C\ell_t^{2-\beta}\), and therefore
\[
    \frac{\E{(S\wedge\ell_t)^2}}{\ell_t}
    \le C\ell_t^{1-\beta}+\frac{C}{\ell_t}
    =O\bigl(\ell_t^{1-\beta}\bigr).
\]
The right side decays polynomially in \(t\), because
\(\ell_t=t/(\log t)^6\) and \(\beta>1\), whereas
\eqref{eq:frac:explicit-cap-margin} says that \(\varepsilon(\hatA_t)\)
decays only logarithmically.

\emph{Condition \eqref{eq:frac:condition-collision}.}
Lemma~\ref{lem:frac:rv-bounds} gives
\(r^{>\sqrt{\ell_t}}\le C\sqrt{\ell_t}\,\Fbar(\sqrt{\ell_t})\).
Since \(\ell_t<\delta t\) for all large \(t\), the lower bound
in~\eqref{eq:frac:potter} with \(x=\ell_t\) and \(y=\delta t\) gives
\[
    \frac{\Fbar(\ell_t)}{\Fbar(\delta t)}
    \le C\left(\frac{\delta t}{\ell_t}\right)^{\alpha+\eta}
    =C\delta^{\alpha+\eta}(\log t)^{6(\alpha+\eta)}.
\]
Combining this with the polynomial bound
\(\Fbar(\sqrt{\ell_t})\le C\ell_t^{-\beta/2}\) yields
\[
    \frac{r^{>\sqrt{\ell_t}}\,\Fbar(\ell_t)}{\Fbar(\delta t)}
    \le C\ell_t^{(1-\beta)/2}(\log t)^{6(\alpha+\eta)}
    \le Ct^{(1-\beta)/2}(\log t)^{3(\beta-1)+6(\alpha+\eta)}
    \longrightarrow0,
\]
because \(\beta>1\).  The only essential fact in this calculation is
\(\alpha>1\): it makes the negative power of \(t\) dominate the logarithmic
factor.
\end{proof}

% Alter-SPLIT appendix removed from the paper (2026-10-05); the file is kept, not compiled.
% \input{sections/B-alter-split}
\end{APPENDICES}

% Acknowledgments here
\ACKNOWLEDGMENT{This work was supported by NSF-CIF-2403194, NSF-III-2322973, and NSF-CMMI-2307008.}

% References here
\bibliographystyle{INFORMS-STSY-Template-26-02-2026/informs2014}
\bibliography{bib}

%%%%%%%%%%%%%%%%%
\end{document}